\newcommand{\idlow}[1]{\mathord{\mathcode`\-="702D\it #1\mathcode`\-="2200}}
\newcommand{\id}[1]{\ensuremath{\idlow{#1}}}
\newcommand{\litlow}[1]{\mathord{\mathcode`\-="702D\sf #1\mathcode`\-="2200}}
\newcommand{\lit}[1]{\ensuremath{\litlow{#1}}}
\newcommand{\tup}[1]{\langle #1 \rangle}
\newcommand{\E}{\ensuremath{\mathbb{E}}}
\newcommand{\topk}{\text{top-k}}
\newcommand{\prob}[1]{\Pr\left[ #1 \right]}
\newtheorem{lemma}{Lemma}
\newtheorem{theorem}{Theorem}
\newtheorem{claim}{Claim}
\newtheorem{corollary}{Corollary}
\DeclareMathOperator{\poly}{poly}
\newtheorem*{rep@theorem}{\rep@title} \newcommand{\newreptheorem}[2]{%
\newenvironment{rep#1}[1]{%
\def\rep@title{\bf #2 \ref{##1}}%
\begin{rep@theorem} }%
{\end{rep@theorem} } }
\newenvironment{proofT}[1]{\proof\def\toto{#1}}{\hspace*{\fill}$\Box_{Theorem~\ref{\toto}}$\par\vspace{3mm}}
\newenvironment{proofL}[1]{\proof\def\toto{#1}}{\hspace*{\fill}$\Box_{Lemma~\ref{\toto}}$\par\vspace{3mm}}
\newenvironment{proofC}{\proof}{\hspace*{\fill}$\Box_{Corollary~\ref{\toto}}$\par\vspace{3mm}}
\newcommand{\thmpostponed}[2]
{
\newcounter{#1}
\setcounter{#1}{\value{theorem}}
\begin{theorem}
\label{thm:#1}
#2
\end{theorem}
\expandafter\def\csname #1\endcsname{
\newcounter{#1temp}
\setcounter{#1temp}{\value{theorem}}
\setcounter{theorem}{\value{#1}}
\begin{theorem}
#2
\end{theorem}
\setcounter{theorem}{\value{#1temp}}
}
\vspace{-1.5em}
}
\newcommand{\thmpostponedwname}[3]
{
\newcounter{#1}
\setcounter{#1}{\value{theorem}}
\begin{theorem}[#2]
\label{thm:#1}
#3
\end{theorem}
\expandafter\def\csname #1\endcsname{
\newcounter{#1temp}
\setcounter{#1temp}{\value{theorem}}
\setcounter{theorem}{\value{#1}}
\begin{theorem}[#2]
#3
\end{theorem}
\setcounter{theorem}{\value{#1temp}}
}
\vspace{-1.5em}
}
\newcommand{\lemmaproofpostponedwname}[4]
{
\newcounter{#1}
\newcounter{#1temp}
\setcounter{#1}{\value{lemma}}
\begin{lemma}[#2]
\label{#1}
#3
\end{lemma}
\expandafter\def\csname #1\endcsname{
\setcounter{#1temp}{\value{lemma}}
\setcounter{lemma}{\value{#1}}
\begin{lemma}
#3
\end{lemma}
\setcounter{theorem}{\value{#1temp}}
\begin{proofL}{#1}
#4
\end{proofL}
}
}
\newcommand{\thmproofpostponedwname}[4]
{
\newcounter{#1}
\newcounter{#1temp}
\setcounter{#1}{\value{theorem}}
\begin{theorem}[#2]
\label{#1}
#3
\end{theorem}
\expandafter\def\csname #1\endcsname{
\setcounter{#1temp}{\value{theorem}}
\setcounter{theorem}{\value{#1}}
\begin{theorem}
#3
\end{theorem}
\setcounter{theorem}{\value{#1temp}}
\begin{proofT}{#1}
#4
\end{proofT}
}
}
\newcommand{\lemmapostponed}[2]
{
\newcounter{#1}
\newcounter{#1temp}
\setcounter{#1}{\value{theorem}}
\begin{lemma}
\label{#1}
#2
\end{lemma}
\expandafter\def\csname #1\endcsname{
\setcounter{#1temp}{\value{theorem}}
\setcounter{theorem}{\value{#1}}
\begin{lemma}
#2
\end{lemma}
\setcounter{theorem}{\value{#1temp}}
}
\vspace{-1.5em}
}
\newcommand{\lemmapostponedwname}[3]
{
\newcounter{#1}
\newcounter{#1temp}
\setcounter{#1}{\value{theorem}}
\begin{lemma}[#2]
\label{#1}
#3
\end{lemma}
\expandafter\def\csname #1\endcsname{
\setcounter{#1temp}{\value{theorem}}
\setcounter{theorem}{\value{#1}}
\begin{lemma}[#2]
#3
\end{lemma}
\setcounter{theorem}{\value{#1temp}}
}
\vspace{-1.5em}
}
\newcommand{\lemmaproofpostponed}[3]
{
\newcounter{#1}
\newcounter{#1temp}
\setcounter{#1}{\value{theorem}}
\begin{lemma}
\label{#1}
#2
\end{lemma}
\expandafter\def\csname #1\endcsname{
\setcounter{#1temp}{\value{theorem}}
\setcounter{theorem}{\value{#1}}
\begin{lemma}
#2
\end{lemma}
\setcounter{theorem}{\value{#1temp}}
\begin{proofL}{#1}
#3
\end{proofL}
}
}
\newcommand{\corollaryproofpostponed}[3]
{
\newcounter{#1}
\newcounter{#1temp}
\setcounter{#1}{\value{theorem}}
\begin{corollary}
\label{#1}
#2
\end{corollary}
\expandafter\def\csname #1\endcsname{
\setcounter{#1temp}{\value{theorem}}
\setcounter{theorem}{\value{#1}}
\begin{corollary}
#2
\end{corollary}
\setcounter{theorem}{\value{#1temp}}
\begin{proofC}
#3
\renewcommand{\toto}{#1}
\end{proofC}
}
}
\renewcommand{\paragraph}[1]{\vspace{0.1cm} \noindent\textbf{#1}\hspace{0.15em}}
\newcommand{\toto}{xxx}
\newtheorem{definition}{Definition} 
\newcommand{\polylog}{\,\textnormal{polylog}\,}
\date{}
\title{Relaxed Schedulers Can Efficiently Parallelize Iterative Algorithms}
\author{Dan Alistarh \\IST Austria \and  Trevor Brown \\IST Austria \and Justin Kopinsky \\MIT \and Giorgi Nadiradze \\ETH Zurich}
\date{}
\begin{document}

\maketitle

\begin{abstract}
There has been significant progress in
understanding the parallelism inherent to iterative sequential algorithms: for
many classic algorithms, the depth of the dependence structure is now well
understood, and scheduling techniques have been developed to exploit this shallow dependence
structure for efficient parallel implementations. A related, applied
research strand has studied methods by which certain iterative task-based algorithms can be
efficiently parallelized via relaxed concurrent priority schedulers. These allow
for high concurrency when inserting and removing tasks, at the cost of 
executing superfluous work due to the relaxed semantics of the scheduler.

In this work, we take a step towards unifying these two research directions, by
showing that there exists a family of relaxed priority schedulers that can
efficiently and deterministically execute classic iterative algorithms such as
greedy maximal independent set (MIS) and matching. Our primary result shows that,
given a randomized scheduler with an expected relaxation factor of $k$ in terms of
the maximum allowed priority inversions on a task, and any graph on $n$ vertices, the scheduler
is able to execute greedy MIS with only an additive factor of $\poly(k)$ expected additional
iterations compared to an exact (but not scalable) scheduler. This counter-intuitive result demonstrates that the overhead of relaxation when computing MIS is \emph{not} dependent on the input size or structure of the input graph. 
Experimental results show that this overhead can be clearly offset by the gain
in performance due to the highly scalable scheduler. In sum, we present an
efficient method to deterministically parallelize iterative sequential
algorithms, with provable runtime guarantees in terms of the number of executed
tasks to completion.

\end{abstract}

\maketitle




\section{Introduction}

Given the now-pervasive nature of parallelism in computation, there has been a tremendous amount of research into efficient parallel algorithms for a wide range of tasks. 
A popular approach has been to  map existing sequential algorithms to parallel architectures, by exploiting their \emph{inherent parallelism}. 
In this paper, we will focus on two specific variants of this strategy. 

The \emph{deterministic} approach, e.g.~\cite{Blelloch, BFS12, Swarm, blelloch2012internally, BFS14, blelloch2016parallelism} has been to study the directed-acyclic graph (DAG) step dependence in  classic, widely-employed sequential algorithms, showing that, 
perhaps surprisingly, this dependence structure usually \emph{has low depth}. 
One can then design schedulers which  exploit this dependence structure for efficient execution on parallel architectures. 
As the name suggests, this approach ensures deterministic outputs (i.e. outputs uniquely determined by the input), 
and can yield good practical performance~\cite{BFS12}, but requires a non-trivial amount of knowledge about the problem at hand, and the use of carefully-constructed parallel schedulers~\cite{BFS12}. 

To illustrate, let us consider the classic sequential greedy  strategy for solving the maximal independent set (MIS) problem on arbitrary graphs: 
the algorithm examines the set of vertices in the graph following a fixed, random sequential priority order, adding a vertex to the independent set if and only if no neighbor of higher priority has already been added. 
The basic insight for parallelization is that the outcome at each node may only depend on a small subset of other nodes, namely its neighbors which are higher priority in the random order.
Blelloch, Fineman and Shun~\cite{BFS12} performed an in-depth study of the asymptotic properties of this dependence structure, proving that, for any graph, the maximal depth of a chain of dependences is in fact $O( \log^2 n )$ with high probability, where $n$ is the number of nodes in the graph. 
Recently, an impressive analytic result by Fischer and Noever~\cite{FN18} provided tight $\Theta( \log n )$ bounds on the maximal dependency depth for greedy sequential MIS, effectively closing this problem for MIS. 
Beyond greedy MIS, there has been significant progress in analyzing the dependency structure of other fundamental sequential algorithms, such as algorithms for matching~\cite{BFS12}, list contraction~\cite{BFS14},  
Knuth shuffle~\cite{BFS14}, linear programming~\cite{Blelloch}, and graph connectivity~\cite{Blelloch}.

An alternative approach has been to employ \emph{relaxed data structures} to schedule task-based programs. 
Starting with Karp and Zhang~\cite{KarZha93}, the general idea is that, in some applications, 
the scheduler can relax the strict order induced by following the sequential algorithm, and allow tasks to be processed speculatively ahead of their dependencies, without loss of correctness. 
A standard example is parallelizing Dijkstra's single-source shortest paths (SSSP) algorithm, e.g.~\cite{SprayList, making-kjell-happy, Nguyen13}: 
the scheduler can retrieve vertices in relaxed order without breaking correctness, as the distance at each vertex is guaranteed to eventually converge to the minimum.  
The trade-off is between the performance gains arising from using simpler, more scalable schedulers, and the loss of determinism and the wasted work due to relaxed priority order. 
This approach is quite popular in practice, as several high-performance relaxed schedulers have been proposed, which can attain state-of-the-art results in settings such as graph processing and machine learning~\cite{Nguyen13, gonzalez2012powergraph}. At the same time, despite good empirical performance, this approach still lacks analytical bounds, and results are no longer deterministic.

In this paper, we ask: is it possible to achieve both the simplicity and good performance of relaxed schedulers \emph{as well as} the predictable outputs and runtime upper bounds of the ``deterministic" approach? 

\paragraph{Contribution.} In a nutshell, this work shows that a natural family of \emph{fair relaxed schedulers}---providing upper bounds on the degree of relaxation, and on the number of inversions that a task can experience---can execute a range of iterative sequential algorithms \emph{deterministically}, preserving the dependence structure, and \emph{provably efficiently}, providing analytic upper bounds on the total work performed. 
Our results cover the classic greedy sequential graph algorithms for \emph{maximal independent set (MIS)}, \emph{matching}, and \emph{coloring}, but also algorithms for \emph{list contraction} and \emph{generating permutations} via the Knuth shuffle.  We call this class \emph{iterative algorithms with explicit dependencies}. 
Our main technical result is that, for MIS and matching in particular, the overhead of relaxed scheduling is \emph{independent of the graph size or structure}. 
This analytical result suggests that relaxed schedulers should be a viable alternative, a finding which is also supported by our preliminary concurrent implementation.

Specifically, we consider the following framework. 
Given an input, e.g., a graph, the sequential algorithm defines a set of \emph{tasks}, e.g. one per graph vertex, which should be processed in order, respecting some fixed, arbitrary data dependencies, which can be specified as a DAG. 
Tasks will be accessible via a \emph{scheduler}, which is \emph{relaxed}, in the sense that it could return tasks \emph{out of order}.  
This induces a sequential model,\footnote{We consider this sequential model, similar to~\cite{BFS12}, since there currently are no precise ways to model the contention experienced by concurrent threads on the scheduler. Instead, we validate our findings via a fully concurrent implementation.}  where at each step, the scheduler returns a new task: for simplicity, assume for now  that the scheduler returns at each step a task chosen \emph{uniformly at random} among the top-$k$ available tasks, in descending priority order. (We will model realistic relaxed schedulers~\cite{MQ, AKLN17} precisely in the following section.)
 
%

Assume a thread receives a task from the scheduler. Crucially, the thread \emph{cannot process the task} if it has data dependencies \emph{on higher-priority tasks}: this way, determinism is enforced. (We call such a failed delete attempt by the thread a \emph{wasted} step.) 
However, threads are free to process tasks which do not have such outstanding dependencies, potentially out-of-order (we call these \emph{successful} steps.)
We measure \emph{work} in terms of the total number of scheduler queries needed to process the entire input, including both successful and unsuccessful removal steps. 
 
We provide a simple yet general approach to analyze this relaxed scheduling process, by characterizing the interaction between the dependency structure induced by a given problem on an arbitrary input, and the relaxation factor $k$ in the scheduling mechanism, which yields bounds on expected work when executing such algorithms via relaxed schedulers. 
 Our approach extends to general iterative algorithms, as long as task dependencies are \emph{explicit}, i.e., can be statically expressed given the input, and tasks can be randomly permuted initially.  
 
 The work efficiency of this framework will critically depend on the rate at which threads are able to successfully remove dependency-free tasks. 
Intuitively, this rate appears to be highly dependent on (1) the problem definition, (2) the scheduler relaxation factor $k$, but also on (3) the structure of the input. 
Indeed, we show that in the most general case, a $k$-relaxed scheduler can process an input described by a dependency graph $G$ on $n$ nodes and $m$ edges and incur  $O( \frac{m}{n}\poly(k) )$ wasted steps, i.e. $n + O( \frac{m}{n}\poly(k) )$ total steps. This result immediately implies a low ``cost of relaxation'' for problems whose dependency graph is inherently \emph{sparse}, such as Greedy Coloring on sparse graphs, Knuth Shuffle and List Contraction, which are characterized by a dependency structure with only $m = O(n)$ edges. Hence, in general, such sparse problems incur negligible relaxation cost when $k \ll n$.  

Our main technical result is a counter-intuitive bound for greedy MIS: our framework equipped with a $k$-relaxed scheduler can execute greedy MIS on \emph{any} graph $G$ and experience only $\poly(k)$ wasted steps (i.e. $n+\poly(k)$ total steps), \emph{regardless of the size or structure of $G$}. This result is surprising as well as technically non-trivial, and demonstrates that for MIS on large graphs, operation-level speedups provided by relaxation come with a negligible global trade-off. A similar result holds for maximal matching. 

In the broader context of the parallel scheduling literature, our results suggest that \emph{task priorities} can be supported in a scalable manner, through relaxation, without loss of determinism or work efficiency. We believe this is the first time this observation is made. 
We validate our results empirically, via a preliminary implementation of the scheduling framework in C++, based on a lock-free extension of the MultiQueue relaxed schedulers~\cite{MQ}.
Our broad finding is that this relaxed scheduling framework can ensure scalable execution, with minimal overheads due to contention and verifying task dependencies. 
For MIS on large graphs, we obtain a solution, with 6x speedup at $24$ threads versus an optimized sequential baseline. 

\paragraph{Related Work.} 
Our work is inspired by the line of research by Blelloch et al.~\cite{Blelloch, BFS12, Swarm, blelloch2012internally, BFS14}, as well as~\cite{coppersmith1987parallel, calkin1990probabilistic, FN18}, whose broad goal has been to examine the dependency structure of a wide class of iterative algorithms, 
and to derive efficient scheduling mechanisms given such structure. 

At the same time, there are several differences between these results and our work. 
First, at the conceptual level,~\cite{BFS12, BFS14} start from analytical insights about the dependency structure of algorithms such as greedy MIS, 
and apply them to design scheduling mechanisms which can leverage this structure, which require problem-specific information. 
In some cases, e.g.~\cite{BFS12}, the scheduling mechanisms found to perform best in practice differ from the structure of the schedules analyzed. 
By contrast, we start from a realistic model of existing high-performance relaxed schedulers~\cite{MQ},  and show that such schedulers can automatically and efficiently execute a broad set of iterative algorithms. 
Second, at the technical level, the methods we develop are \emph{different}: for instance, the fact that the iterative algorithms we consider have low dependency depth~\cite{Blelloch, BFS12, BFS14} does not actually help our analysis, since a sequential algorithm could have low dependency depth and be inefficiently executable by a relaxed scheduler: the bad case here is when the dependency depth is low (logarithmic), but each ``level" in a breadth-first traversal of the dependency graph has high fanout.  
Specifically, we emphasize that the notion of \emph{prefix} defined in~\cite{BFS12} to simplify analysis is \emph{different} from the set of positions $S$ which can be returned by the relaxed stochastic scheduler: for example, the parallel algorithm in~\cite{BFS12} requires each prefix to be fully processed before being removed, whereas $S$ acts like a sliding window of positions in our case. 
The third difference is in terms of analytic model: references such as~\cite{BFS12} express work bounds in the CRCW PRAM model, whereas we count work in terms of number of tasks processing attempts. Our analysis is  sequential, and we implement our algorithms on a shared memory architecture to demonstrate empirically good performance.

To our knowledge, the first instance of a relaxed scheduler is in work by Karp and Zhang~\cite{KarZha93}, for parallelizing backtracking strategies in a (synchronous) PRAM model.  
This area has recently become extremely active, with several such schedulers (also called relaxed priority queues) being proposed over the past decade, see~\cite{LotanShavit, Basin11, klsm, SprayList, Haas, Nguyen13, MQ, AKLN17, sagonas2017contention} for recent examples. 
In particular, we note that state-of-the-art packages for graph processing~\cite{Nguyen13} and machine learning~\cite{gonzalez2012powergraph} implement such relaxed schedulers. 

Recent work by a subset of the authors~\cite{AKLN17} showed that a simple and popular priority scheduling mechanism called the MultiQueue~\cite{MQ, Haas, gonzalez2012powergraph} enforces strong probabilistic guarantees on the rank of elements returned, in an idealized model. 
Concurrent work~\cite{ABKLN} proves that these guarantees in fact hold in asynchronous concurrent executions, under some analytic assumptions. 
Based on this result, our work bounds should hold when using MultiQueues as relaxed schedulers, in concurrent executions. 
 
Parallel scheduling~\cite{blumofe, Cilk} is an extremely vast area and a complete survey is beyond our scope. We do wish to emphasize that standard work-stealing schedulers \emph{will not} provide this type of work bounds, since they do not provide any guarantees in terms of the \emph{rank of elements removed}:  the rank becomes unbounded over long executions, since a single random queue is sampled at every stealing step~\cite{AKLN17}. To our knowledge, there is only one previous attempt to add priorities to work-stealing schedulers~\cite{imam2015load}, using a multi-level global queue of tasks, partitioned by priority. This technique is different, and provides no work guarantees. 

\section{Executing Iterative Algorithms via Priority Schedulers}

\subsection{Modeling Relaxed Priority Schedulers}
\label{sec:relax}

In the following, we will provide the sequential specification of a generic relaxed priority scheduler $Q$, which contains a set of $\langle\id{task}, \id{priority}\rangle$ pairs. A relaxed priority scheduler will provide the following methods:
\begin{itemize}
	\item \lit{ApproxGetMin}(), which returns a $\tup{\id{task}, \id{priority}}$ pair and deletes it from the structure, if a task is available, or $\bot$, otherwise. The relaxation guarantees of this operation are precisely defined below;
	\item \lit{Empty}(), which returns whether the scheduler still has tasks or not;       
	\item \lit{Insert}( $\tup{\id{task}, \id{priority}}$ ), which inserts a new task into $Q$. 
\end{itemize}

Let \lit{rank}$(t)$ be the rank of the task which is returned by the $t^{th}$ \lit{ApproxGetMin} operation, among all tasks present in $Q$. 
We say that task $u \in Q$ experiences a \emph{priority inversion} at an \lit{ApproxGetMin} step if a task $v$ of \emph{lower} priority than $u$ is retrieved at that step. For any task $u$, let $\lit{inv}(u)$ be the number of inversions which the task $u$ experiences before being removed. 

\begin{definition}
	Fix a relaxed priority scheduler $Q$, with parameters $k \geq 1$, the \emph{rank bound}, and $\phi$, the \emph{fairness bound}. We say that $Q$ is an $(k, \phi)$-relaxed priority scheduler if it ensures the following:
\begin{enumerate}
	\item \textbf{Rank Bound.} For any time $t$, and any integer $\ell >1$, \\ $\Pr [ \lit{rank} (t) \geq \ell ] \leq \exp\left( - \ell / k \right)$.
	\item \textbf{Fairness Bound.} For any task $u$, and any integer $\ell \geq 1$, $\Pr [ \lit{inv}(u) \geq \ell ] \leq \exp\left( - \ell / \phi \right)$.
\end{enumerate}
\end{definition}

\paragraph{Relation to Practical Schedulers.} 
Upon inspection, both the SprayList~\cite{SprayList} and the MultiQueue~\cite{MQ} relaxed priority schedulers ensure these exponential tail bounds on both rank and fairness, under some analytic assumptions. These conditions are trivially ensured by deterministic implementations such as~\cite{klsm}. 
In particular, the SprayList ensures these bounds with parameters $k$ and $\phi$ in $O( p \polylog p )$, where $p$ is the number of processors~\cite{SprayList}. 
MultiQueues ensure these bounds with parameters $k = O( m )$, and $\phi = O( m \log m )$, where $m$ is the number of distinct priority queues~\cite{AKLN17}. This holds even in concurrent executions~\cite{ABKLN}. 

In the following, it will be convenient to assume a single parameter $k$, which upper bounds both the rank and the relaxation parameters. We call the $(k, k)$-relaxed scheduler simply a $k$-relaxed scheduler.

\subsection{A General Scheduling Framework}
\label{sec:algorithms}

We now present our framework for executing task-based sequential programs, whose pseudocode is given in 
Algorithm~\ref{alg:exact}. 
We assume a permutation $\pi$ which dictates an execution order on tasks. 
If $u$ is the $i^{th}$ element in $\pi$, we will write $\pi(i) = u$ and $\ell(u) = i$ ($\ell$ for \emph{label}). 
Algorithm~\ref{alg:exact} encapsulates a large number of common iterative algorithms on graphs, including Greedy Vertex Coloring, Greedy Matching, Greedy Maximal Independent Set, Dijkstra's SSSP algorithm, and even some algorithms which are not graph-based, such as List Contraction and Knuth Shuffle~\cite{Blelloch}. 
We show sample instantiations of the framework in Section~\ref{sec:examples}.

\DontPrintSemicolon

\begin{algorithm}
\caption{Generic Task-based Framework}
\label{alg:exact}
{
\small
    \KwData{\emph{Dependency} Graph $G = (V,E)$}
    \KwData{Vertex permutation $\pi$}
        
    \CommentSty{// Q is an exact priority queue}
    
    $Q \gets$ vertices in $V$ with priorities $\pi(V)$\; 
    
    \For{each step $t$}
    {
        \CommentSty{// Get new element from the buffer}
        
        $v_t \gets Q.\mathsf{GetMin}()$\;
        $\mathsf{Process}(v)$\label{line:exact-process} 
            
        Remove $v_t$ from $Q$\;

        \If{$Q.\mathsf{empty}()$}
        {
            { $\mathsf{break}$\;}
        }
    }
}
\end{algorithm}

\begin{algorithm}
\caption{Relaxed Scheduling Framework}
\label{alg:generic}
{
\small
    \KwData{\emph{Dependency} Graph $G = (V,E)$}
    \KwData{Vertex permutation $\pi$}
    \KwData{Parameter $k$}
        
    \CommentSty{// Q is a $k$-relaxed scheduler}
    
    $Q \gets$ vertices in random order\; 
    
    \For{each step $t$}
    {
        \CommentSty{// Get new element from the buffer}
        
        $v_t \gets Q.\mathsf{ApproxGetMin}()$\;
        \If{ $v_t$ has unprocessed predecessor\label{line:find-pred} }
        { $Q.insert(v_t,\pi(v_t))$\CommentSty{ // Failed; reinsert}\;$\mathsf{continue}$\label{line:generic-continue} }
        \lElse
        {
            $\mathsf{Process}(v)$\label{line:process} 
        }
        
       \lIf{$Q.\mathsf{empty}()$}
        {
            { $\mathsf{break}$}
        }
    }
}
\end{algorithm}

Algorithm~\ref{alg:generic} gives a method for adapting Algorithm~\ref{alg:exact} to use a \emph{relaxed} queue, given an explicit dependency graph $G=(V,E)$ whose nodes are the tasks, and whose edges are dependencies between tasks.
Importantly, given the dependency graph $G$, Algorithm~\ref{alg:generic} gives the same output as Algorithm~\ref{alg:exact}, irrespective of the relaxation factor $k$.  As usual, we write $|V| = n$ and $|E| = m$. We assume that the permutation $\pi$ represents a \emph{priority order} so that an edge $e = (u,v) \in E$ means that $v$ depends on $u$ if $\ell(v) > \ell(u)$ and vice-versa. In the former case, we say that $v$ is a \emph{successor} of $u$ and $u$ is a \emph{predecessor} of $v$.

Our main result regarding Algorithm~\ref{alg:generic}, proven formally in Section~\ref{sec:generic-analysis}, argues that if $\pi$ is chosen uniformly at random from among all vertex permutations, then Algorithm~\ref{alg:generic} completes in at most $n + O(\frac{m}{n}poly(k))$ iterations (compared to exactly $n$ for Algorithm~\ref{alg:exact}). This result demonstrates that provided $G$ is not too dense, the ``cost of relaxation'' is low for the class of problems which admit uniformly random task permutations. Notably, this class includes all of the problems mentioned above, except for Dijkstra's algorithm (since there, $\pi$ needs to respect the ordering of nodes sorted by distance from the source).

\subsection{Example Applications}
\label{sec:examples}

Applying the sequential task-based framework of Algorithm~\ref{alg:exact} only requires an implementation of $\mathsf{Process(v)}$. 
Implementing the relaxed framework in Algorithm~\ref{alg:generic} further requires $G$ (either explicitly or via a predecessor query method). 
We now give examples for Greedy Vertex Coloring and List Contraction, whose dependency graph is implicit.

\paragraph{Greedy Vertex Coloring.} Vertex Coloring is the problem of assigning a \emph{color} (represented by a natural number) to each vertex of the input graph, $G$, such that no adjacent vertices share a color. The Greedy Vertex Coloring algorithm simply processes the vertices in some permutation order, $\pi$, and assigns each vertex in turn the smallest available color. The implementation of $\mathsf{Process(v)}$ for Greedy Vertex Coloring needs to determine the color of $v$, which can be done as described below:

\begin{algorithm}
\caption{Greedy Vertex Coloring $\mathsf{Process}(v)$}
\small
\KwData{Input Graph $G = (V,E)$}
\KwData{Permutation $\pi$}
\KwData{\emph{Partial} coloring $c: V\to \mathbb{N}$}
\Fn{$\mathsf{Process(v):}$}
{
	$S \gets \emptyset$\;
	\ForEach{$(u,v)\in G,$ s.t. $\ell(u) < \ell(v)$}
	{
		$S \gets S \cup \{c(u)\}$\;
	}
	$c(v) \gets \min_{i \in \mathbb{N}} i\notin S$\;
}
\end{algorithm}

Since the underlying dependency graph is just the input graph with edge orientations given by $\pi$, this is all that needs to be provided.

\paragraph{List Contraction.} List Contraction takes a doubly linked list, $L$, and iteratively \emph{contracts} its nodes. Contracting a node $v$ consists of swinging two pointers: $v\mathsf{.next.previous} \gets v\mathsf{.previous}$ and $v\mathsf{.previous.next} \gets v\mathsf{.next}$, effectively removing $v$ from the list. List Contraction is useful, e.g., for cycle counting. Although List Contraction is not inherently a graph problem, we can still construct a dependency graph $G$ whose nodes are list elements and with an edge between elements which are adjacent in $L$. If we induce a priority order on list elements (e.g. uniformly at random), then there is an induced orientation of the edges of the dependency graph which forms a DAG. Then a predecessor query on $v$ consists of checking whether either $v\mathsf{.next}$ or $v\mathsf{.prev}$ is an unprocessed predecessor. $\mathsf{Process}(v)$ can be implemented with just the two steps of contraction above (possibly along with the metrics the application is computing).

\subsection{Greedy Maximal Independent Set}

We give a variant of Algorithm~\ref{alg:generic} adapted for Greedy Maximal Independent Set (MIS), which makes use of some exploitable substructure. In particular, once some neighbor, $u$, of a vertex $v$ is added to the MIS, then $v$ can never be added to the MIS, at which point $v$'s dependents no longer have to wait for $v$ to be processed. Algorithm~\ref{alg:mis} implements MIS in the framework of Algorithm~\ref{alg:generic} while also making use of this observation. Interestingly, Algorithm~\ref{alg:mis} can also be used to find a maximal matching by taking the input graph $G$ of the matching instance and converting it to a graph $G'$, where $G'$ has a vertex for each edge in $G$ and there is an edge between vertices of $G'$ if the corresponding edges of $G$ share an incident vertex. (One can view matching as an ``independent set'' of edges, no two of which are incident to the same vertex.)

\begin{algorithm}
\caption{Relaxed Queue MIS}
\label{alg:mis}
{\scriptsize
    \KwData{Graph $G=(V,E)$}
    \KwData{Vertex permutation $\pi$}
    \KwData{Parameter $k$}
    
    \CommentSty{// Q is a $k$-approximate priority queue}
    
    $Q \gets$ vertices in random order, all marked \emph{live}\; 
    
    \For{each step $t$}
    {
        \CommentSty{// Get new element from the buffer}
        
        $v_t \gets Q.\mathsf{ApproxGetMin}()$\;
        \lIf{ $v_t$ marked dead}
        	{  continue }
        \ElseIf{ $v_t$ has live predecessor in $Q$ }
        { $Q.insert(v_t,\pi(v_t))$\CommentSty{ // Failed; reinsert}\;$\mathsf{continue}$\label{line:mis-reinsert} }
        \Else
        {
            Add $v_t$ to MIS\; \label{line:add}
            Mark all of $v_t$'s neighbors dead\;
            Remove $v_t$ from $Q$\;
        }
        
        \lIf{$Q.\mathsf{empty}()$}
        {
            { $\mathsf{break}$}
        }
    }
}
\end{algorithm}

As we will show in Section~\ref{sec:analysis}, the simple improvement Algorithm~\ref{alg:mis} makes over Algorithm~\ref{alg:generic} results in only a negligible number of extra iterations due to relaxation.

\section{Analysis}
\label{sec:analysis}

In this section, we will bound the relaxation cost for the general framework (Algorithm~\ref{alg:generic}) and for Maximal Independent Set (Algorithm~\ref{alg:mis}). Algorithm~\ref{alg:generic} is easier to analyze and will serve as a warmup. Note that in both cases, $n$ iterations are required to process all nodes and are necessary even with no relaxation. Thus, we can think of the ``cost'' of relaxation as the number of further iterations beyond the first $n$, which can be equivalently counted as the number of \emph{re-insertions} performed by the algorithm. We will sometimes refer to executing such a re-insertion as a ``failed delete'' by $Q$.

Our primary goal will be to bound the number of iterations of the for loops in Algorithm~\ref{alg:generic} and~\ref{alg:mis} when running them sequentially with a $k$-relaxed priority queue. Although the initial analysis is sequential, the algorithms are parallel: threads can each run their own for loops concurrently and correctness is maintained. The difficulty in extending the analysis to the asynchronous setting is that it is not clear how to model failed deletes of dependents of a node that is being processed. The likelihood of such deletes depend on particulars of both the problem (i.e. how long processing and dependency checking steps actually take) and the thread scheduler and so are hard to model in our generic framework. Instead, we show empirically that our bounds hold in practice on a realistic asynchronous machine where threads run the loops fully in parallel. 

The theorems we will prove are the following. Given a dependency graph $G = (V,E)$ with $|V| = n$ vertices and $|E| = m$ edges, we first bound the number of iterations of Algorithm~\ref{alg:generic}:

\begin{theorem}
\label{thm:generic}
Algorithm~\ref{alg:generic} runs for $n + O\left(\frac{m}{n}\right)\poly(k)$ iterations in expectation.
\end{theorem}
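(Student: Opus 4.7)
The plan is to bound the expected number of failed deletes (the re-insertions triggered by the predecessor check), which I denote by $F$. Since Algorithm~\ref{alg:generic} performs exactly $n$ successful deletions (one per vertex) and nothing else shrinks the queue, the total iteration count equals $n + F$, so it suffices to prove $\mathbb{E}[F] = O(m/n)\poly(k)$.

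The first step is a charging argument. A failed delete at step $t$ occurs iff the popped vertex $v_t$ has some predecessor $u$ still in $Q_t$, i.e., $\{u, v_t\} \in E$ with $\pi(u) < \pi(v_t)$. The very same \lit{ApproxGetMin} event is, by definition, a priority inversion for $u$ caused by the popped vertex $v_t$. Charging the failed delete to every such responsible $u$ (and thus over-counting whenever $v_t$ has several predecessors in $Q_t$, which only loosens the bound), one obtains
\[ F \;\le\; \sum_{\{u,v\}\in E} Y_{u,v}, \qquad\text{where}\qquad Y_{u,v} \;:=\; \#\{\,t \,:\, v_t = v,\; u \in Q_t,\; \pi(u) < \pi(v)\,\}. \]
Each event counted by $Y_{u,v}$ is a specific priority inversion of $u$, so $\sum_{v:\,\pi(v)>\pi(u)} Y_{u,v} = \text{inv}(u)$, and the fairness bound integrates to $\mathbb{E}[\text{inv}(u)] \le \sum_{\ell \ge 1}\exp(-\ell/k) = O(k)$.

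The second and main step is to turn this into a per-edge bound using a symmetry argument. The priority scheduler is oblivious to the graph: its actions depend only on the priorities present in $Q$, not on vertex identities, and $\pi$ is an independent uniform random permutation. Thus, after averaging over $\pi$, the ``inverting vertex'' at each inversion of $u$ is exchangeable among the not-yet-processed vertices with label larger than $\pi(u)$, so each specific graph-neighbor of $u$ inherits an $O(\deg(u)/n)$ share of $u$'s inversions in expectation. This gives
\[ \mathbb{E}\!\left[\sum_{v :\,\{u,v\}\in E,\;\pi(v)>\pi(u)} Y_{u,v}\right] \;=\; O\!\left(\frac{\deg(u)}{n}\right)\cdot O(k) \;=\; O\!\left(\frac{k\,\deg(u)}{n}\right). \]
Summing over $u$ and using $\sum_u \deg(u) = 2m$ yields $\mathbb{E}[F] = O(km/n)$, which is of the desired form $O(m/n)\poly(k)$.

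The main obstacle will be formalizing the exchangeability step, because $Q_t$ and the scheduler's state evolve through the failed-delete feedback loop and hence in principle depend on both $\pi$ and the graph structure. I intend to address this via a deferred-decision coupling on $\pi$: at each step reveal only the priority position the scheduler emits and the vertex it corresponds to, leaving the vertex-to-priority matching on the unrevealed tail uniform. Conditioned on the sequence of priority positions the scheduler emits (which depends only on its own randomness and the priorities in $Q$), the identity of each newly revealed vertex is uniform over the compatible pool, so the probability it is a neighbor of a given $u \in Q_t$ is at most $\deg(u)/|Q_t|$; aggregating over the steps counted by $\text{inv}(u)$ and using the rank/fairness tails then recovers the per-vertex estimate above.
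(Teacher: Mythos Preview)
Your charging inequality $F \le \sum_e Y_{u,v}$ is fine, and so is the observation that $\sum_{v:\pi(v)>\pi(u)} Y_{u,v} \le \mathrm{inv}(u)=O(k)$. The gap is in the exchangeability step, and the deferred-decision coupling you sketch does not close it.

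You assert that ``the sequence of priority positions the scheduler emits depends only on its own randomness and the priorities in $Q$.'' But the multiset of priorities present in $Q$ at step $t$ is determined by which pops were \emph{successful} versus \emph{re-inserted}, and that in turn depends on the graph and on the vertex identities already revealed. Concretely: if position $p'>\pi(u)$ has been popped twice, the very fact that it was re-inserted reveals that the vertex at $p'$ had an unprocessed predecessor; conditioned on $u$ still being in $Q$, this biases the vertex at $p'$ toward being a neighbor of $u$. So the event ``position $p'$ causes an inversion of $u$'' and the event ``the vertex at $p'$ is a neighbor of $u$'' are positively correlated, not independent. Equivalently, a neighbor of $u$ at position $p'$ stays in $Q$ (and hence can be re-popped) precisely because $u$ blocks it, whereas a non-neighbor at $p'$ may be removed on its first pop. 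This destroys the symmetry you need to claim that each inversion of $u$ lands on a neighbor with probability $O(\deg(u)/n)$. Your own deferred-decision description reveals, at every step, both the position \emph{and} the re-insert/remove bit, and the latter encodes graph information; you cannot then treat the remaining label-to-vertex matching as uniform with respect to the quantity you are trying to bound.

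The paper avoids this entanglement by charging each failed delete not to an arbitrary blocking edge but to the \emph{lexicographically first} edge $(u,v)$ with both endpoints unprocessed and $\ell(v)\le\ell(w)$, which forces $u$ to be a root. It then bounds, for each fixed edge $e$, the probability that $e$ is ever charged (Lemma~\ref{lem:active}: $u$ experiences an inversion on or above $v$ with probability $O(k^3\log k/n)$, using only the rank and fairness tails and the randomness of $\ell(v)$ given $\ell(u)$), and separately bounds the expected number of charges given that $e$ is active (Lemma~\ref{lem:inv}: $O(k)$, since $u$ is a root). Multiplying and summing over $m$ edges gives $O(k^4\log k)\cdot m/n$. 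The point is that Lemma~\ref{lem:active} is a per-edge statement whose proof never needs to argue that inversions are spread uniformly over the queue; it only uses the tail bounds on rank and fairness together with the uniform placement of $v$ relative to $u$.
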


By contrast to Algorithm~\ref{alg:generic}, we show that using a relaxed queue for computing Maximal Independent Sets on large graphs has essentially no cost at all, even for dense graphs! In particular, Algorithm~\ref{alg:mis} incurs a relaxation cost with no dependence at all on the size or structure of $G$, only on the relaxation factor $k$:

\begin{theorem}
\label{thm:mis}
Algorithm~\ref{alg:mis} runs for $n + \poly(k)$ iterations in expectation.
\end{theorem}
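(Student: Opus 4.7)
The plan is to reduce the theorem to bounding the expected number of re-insertions: since each vertex is successfully popped from $Q$ exactly once (as an MIS addition or a dead pop), the total iterations equal $n$ plus the number of re-insertions, and we aim to show that the latter is $\poly(k)$ in expectation. This must be a substantial improvement over the bound of Theorem~\ref{thm:generic}, which becomes $\Omega(n\poly(k))$ for dense graphs, so the argument must genuinely exploit the MIS structure.

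First, I would establish the key structural observation that distinguishes MIS: at any time, the highest-priority \emph{live} vertex in $Q$ has no live predecessor, since such a predecessor would itself be a higher-priority live vertex in $Q$, contradicting maximality. Thus if the scheduler returns this top-live vertex, the pop necessarily succeeds as an MIS addition. Re-insertions therefore require the scheduler to return a vertex of rank $\geq 2$ \emph{and} for this vertex to be adjacent to some higher-priority live vertex still in $Q$. A simple consequence is that MIS additions occur in priority order.

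Next, I would charge each re-insertion of $v_t$ to the highest-priority live predecessor $u_t$ in $Q$ at that step. The crucial MIS-specific fact is that once $u_t$ is added to MIS, $v_t$ is immediately killed and cannot be re-inserted again; symmetrically, if $u_t$ is itself killed by a higher-priority MIS neighbor, it ceases to be a live predecessor altogether. Hence each $u$ accrues charges only during its live period, and the eventual ``release event'' (either $u$ joining MIS or $u$ being killed) simultaneously frees an entire cone of successors. In contrast to Algorithm~\ref{alg:generic}, where a vertex may cycle through many successive predecessors, here the dependency structure collapses rapidly after each MIS addition.

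The main obstacle, and the heart of the argument, is bounding the total charges by $\poly(k)$ independent of $n$. A naive phase-by-phase bound using the rank/fairness guarantees to show that each MIS vertex is popped within $O(k)$ tries of becoming top-live yields only $O(|\mathrm{MIS}|\cdot \poly(k)) = O(n\poly(k))$, which is too weak. To sharpen this, I would aim to couple the relaxed execution with the exact sequential execution up to a rolling ``frontier window'' of size $\poly(k)$: once the algorithm has advanced past a vertex $v$ in priority order, the probability that $v$ is re-inserted decays exponentially in its rank, so summing $\Pr[\mathrm{rank}\geq \ell] \leq e^{-\ell/k}$ over vertices yields a geometric-type series bounded by $\poly(k)$. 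The chief technical difficulty is rigorously justifying this coupling in the presence of dead vertices still lingering in $Q$ (which can temporarily inflate live vertices' ranks) and for vertices whose live-predecessor set remains large; both effects must be handled by amortizing re-insertions against the cascading killings triggered by each MIS addition.
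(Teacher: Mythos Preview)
Your proposal has a genuine gap: it never invokes the randomness of the permutation $\pi$, and without that randomness the theorem is false. Take a perfect matching on $n$ vertices with the adversarial labeling $\ell(a_i)=2i-1$, $\ell(b_i)=2i$. At every stage of the execution, roughly half of the live vertices near the front of $Q$ are blocked $b_i$'s waiting on their partner $a_i$; by symmetry each $b_i$ is popped before $a_i$ about once in expectation, so the total number of re-insertions is $\Theta(n)$, not $\poly(k)$. Your ``rolling frontier window'' argument, which appeals only to the scheduler's rank and fairness bounds, cannot distinguish this labeling from a random one and would therefore (if it went through) prove a false statement. The geometric series $\sum_\ell e^{-\ell/k}=O(k)$ bounds the expected \emph{rank} of a single pop, but says nothing about how many of the $n$ pops land on blocked vertices; that fraction is governed by the interaction between $\pi$ and the graph structure, which your outline never touches. (A minor aside: MIS additions need not occur in priority order, since two non-adjacent vertices can be added in either order; that claim is incorrect, though not load-bearing.)

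The paper's proof exploits the random permutation in an essential way. It defines a \emph{hot edge} $(u,v)$ as one where $u$ is the least-labeled MIS neighbor of $v$; there are fewer than $n$ hot edges (one per non-MIS vertex), and every re-insertion can be charged to a hot edge whose root endpoint is currently suffering an inversion on or above the other endpoint. The crux is to bound, \emph{over the randomness of $\pi$}, the probability that a given hot edge ever becomes active in this sense. Conditioning on $\ell(u)=t$, the label $\ell(v)$ is (conditionally on $H_e$) still uniform over $\{t+1,\dots,n\}$, so Corollary~\ref{cor:active} gives $\Pr[A_e\mid H_e,\ell(u)=t]=O(k^3\log k/(n-t))$; a Chebyshev-sum-inequality averaging over the distribution of $\ell(u)$ given $H_e$ then yields $\Pr[A_e\mid H_e]=O(k^3\log k/n)$. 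Combined with $O(k)$ expected charges per active edge (Lemma~\ref{lem:inv}) and fewer than $n$ hot edges, the total is $O(k^4\log k)$. The $1/(n-t)$ factor---arising precisely because a random $v$ is typically far from $u$ in the ordering---is what cancels the factor of $n$ from the hot-edge count, and this is exactly the ingredient your outline is missing.
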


Before delving into the individual analyses, we first consider some key characteristics of a particular relaxed queue which will be at play, and quantify them in terms of the \emph{fairness} and \emph{rank error} of $Q$. As discussed in Section~\ref{sec:relax}, we will assume that $Q$ is $k$-relaxed: that is, $Q$ provides exponential tail bounds on the rank error and on the number of inversions experienced by an element, in terms of the parameter $k$.   
Intuitively, it may help to think of a queue which returns a uniformly random element of the \topk{} at each step as the ``canonical'' $k$-relaxed $Q$. See Figure~\ref{fig:topk} for an illustration. (As discussed in Section~\ref{sec:relax}, real schedulers have slightly different properties, which are captured in our framework.) 
We state and prove two technical lemmas parameterized by $k$. 

First, we characterize the probability that, for some edge $e = (u,v)$ in the dependency graph where $u$ is a predecessor of $v$, $v$ experiences an inversion before $u$ is processed.
We say that vertex $u$ experiences an inversion on or above node $v$ at some point during the execution if $\ell(u) < \ell(v)$,  but some node with label \emph{at least} $\ell(v)$ is returned by $Q$ before $u$ is processed during the execution. 

\begin{figure}
\centering
	\includegraphics[scale=0.4]{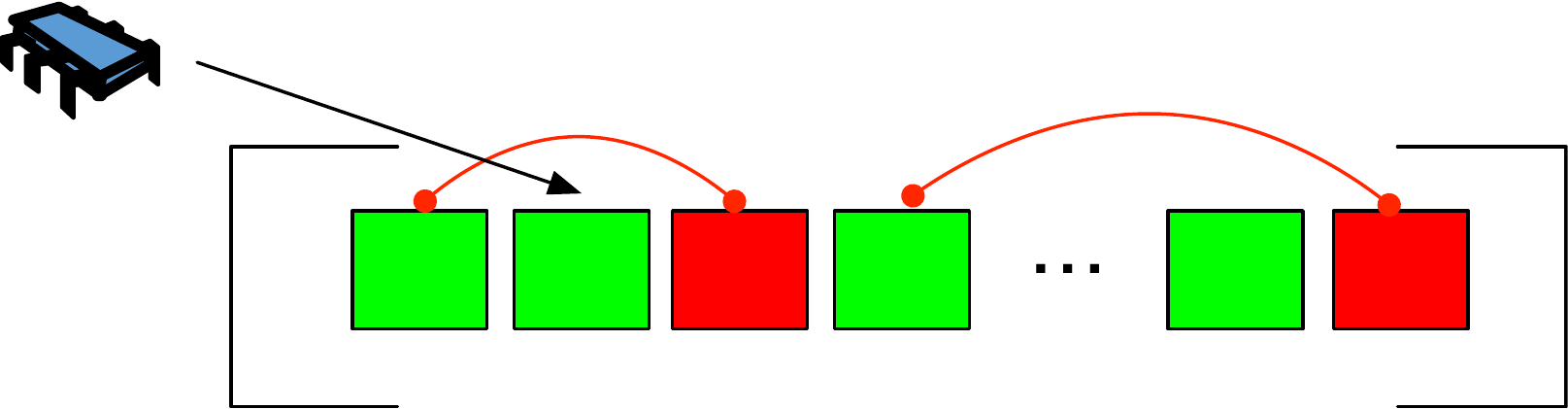}
	\caption{Simple illustration of the process. The blue thread queries the relaxed scheduler, which returns one of the top $k$ tasks, on average (in brackets). Some of these tasks (green) can be processed immediately, as they have no dependencies. Tasks with dependencies (red) cannot be processed yet, and therefore result  in failed deletes. }
\label{fig:topk}
\end{figure}

\begin{lemma}
\label{lem:active}
Consider running Algorithm~\ref{alg:generic} (or Algorithm~\ref{alg:mis}) using a $k$-relaxed queue $Q$ on input graph $G=(V,E)$. For a fixed edge $e = (u,v)$, the probability that 
$u$ experiences an inversion on or above $v$ during the execution is bounded by $O(k^3 \log k/n)$.
\end{lemma}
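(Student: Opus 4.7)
My plan is to exploit the randomness of $\pi$ together with the rank and fairness bounds of the $k$-relaxed scheduler $Q$. By the definition of ``inversion on or above $v$'', the bad event $E$ implicitly requires $\ell(u) < \ell(v)$; I will write $d := \ell(v) - \ell(u) \geq 1$. The event $E$ is equivalent to the existence of a step $t < T_u$, where $T_u$ is the step at which $u$ is successfully processed, at which some $w$ with $\ell(w) \geq \ell(v)$ is returned by $Q$. I plan to bound $\Pr[E]$ via a Markov-style bound on the expected number of such bad returns, summed across candidate vertices $w$ and steps $t$, and then average over the distribution of $d$.

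The core calculation is as follows. For a candidate $w$ with $\ell(w) \geq \ell(v)$ to be returned at step $t$ with $u \in Q$, the rank of $w$ at step $t$ is at least the number of label-below-$\ell(w)$ elements still in $Q$, which is at least $\ell(w) - t$ since at most $t-1$ elements have been successfully processed by that point. The rank bound then gives $\Pr[w \text{ returned at } t] \leq \exp(-(\ell(w)-t)/k)$ whenever $t < \ell(w)$. Summing geometrically in $t$, the expected number of returns of a given $w$ strictly before step $\ell(w)$ is $O(k)$. In the complementary regime $t \geq \ell(w) \geq \ell(u) + d$, the condition $u \in Q$ forces $T_u - \ell(u) \geq d$, which by the fairness bound $\Pr[\lit{inv}(u) \geq \ell] \leq \exp(-\ell/k)$ happens with probability exponentially small in $d/k$. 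Summing over candidates $w$ and using geometric decay in $\ell(w) - \ell(u)$ then yields $\Pr[E \mid d] \leq \poly(k) \cdot \exp(-d/k)$.

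Finally, I will average over the distribution of $d$ induced by the random permutation: since $u,v$ are fixed and $\pi$ is uniform, $\Pr[d = d_0] = 2(n-d_0)/(n(n-1)) = O(1/n)$ for each $d_0 \in \{1, \ldots, n-1\}$. Combining with the conditional bound gives $\Pr[E] \leq (1/n) \cdot \sum_{d_0 \geq 1} \poly(k) \cdot \exp(-d_0/k) = O(\poly(k)/n)$, and tracking the polynomial factors through the rank and fairness estimates above should collapse to the claimed $O(k^3 \log k / n)$. The main obstacle I anticipate is tightly handling the re-insertions of elements still waiting on unprocessed predecessors, since these can inflate both the number of opportunities for a bad return and the value of $T_u$; I expect to charge each re-insertion to an inversion experienced by some predecessor via the fairness bound and propagate these charges through the DAG, which is what accounts for the extra polynomial-in-$k$ factors beyond the minimal $k \log k / n$ one might naively hope for.
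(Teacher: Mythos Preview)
Your approach has a genuine gap at the step where you invoke the fairness bound. You write that ``the condition $u \in Q$ forces $T_u - \ell(u) \geq d$, which by the fairness bound $\Pr[\lit{inv}(u) \geq \ell] \leq \exp(-\ell/k)$ happens with probability exponentially small in $d/k$.'' But $T_u - \ell(u) \geq d$ does \emph{not} imply $\lit{inv}(u) \geq d$ (or anything close): among the $T_u - 1$ steps before $u$ is processed, at most $\ell(u)-1$ can be successful removals of elements with label $< \ell(u)$, but arbitrarily many can be \emph{failed} deletes (re-insertions) of such elements or of $u$ itself, and none of these count as inversions for $u$. Your proposed fix---charging each such re-insertion to an inversion of some predecessor and propagating through the DAG---does not obviously close the gap: there are up to $\ell(u)-1$ distinct smaller-label elements, each tolerating $O(k)$ inversions by fairness, so the naive charging bound on re-insertions of label-$\leq \ell(u)$ elements is $O(k\,\ell(u))$, which can be $\Theta(kn)$ and destroys any decay in $d$. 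The same issue undermines your ``geometric decay in $\ell(w) - \ell(u)$'' claim in the first regime, since that decay also implicitly relies on $T_u$ being close to $\ell(u)$.

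The paper sidesteps this entirely by never trying to control $T_u - \ell(u)$. Instead it works with the window $\Delta_k$ between the step $t_u$ of $u$'s \emph{first} inversion and $T_u$. Two observations drive the argument: (i) at step $t_u$ the returned element has rank exceeding $R_u$ (the rank of $u$ then), so the rank bound gives $R_u = O(k\log k)$ w.h.p.; (ii) while $u$ is in $Q$, the rank-$1$ element has label $\leq \ell(u)$ and is always a root, hence is \emph{successfully processed} whenever returned---by Claim~\ref{claim:top} this happens with probability $\geq 1/k$ per step, so after $O(kR_u)=O(k^2\log k)$ steps the top has been removed $R_u$ times and $u$ is gone. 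This bounds $\Delta_k$ w.h.p.\ by $O(k^2\log k)$ \emph{regardless of re-insertions}, because the argument only tracks successful removals of the current rank-$1$ element. One then union-bounds over the $\Delta_k$ steps, using the fairness bound on $u$ to keep the rank of $v$ at least $\ell/2$ (where $\ell=\ell(v)-\ell(u)$) throughout, and finally averages over the uniform distribution of $\ell$. The missing idea in your sketch is exactly observation (ii): once you pivot to counting removals of the rank-$1$ element, re-insertions become irrelevant.
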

\begin{proof}
	We begin by proving a few immediate claims.
\begin{claim}
\label{claim:top}
	At any time $t$, the probability of removing the element of top rank from $Q$ is at least $1 / k$. 
\end{claim}
\begin{proof}
	By the rank bound, we have that $\Pr[ \lit{rank} (t)  \geq 2 ] \leq (1 / e)^{2 / k} < 1 - 1 / k.$
	It therefore follows that $\Pr [ rank(t) = 1 ] \geq 1 / k$. 
\end{proof}

Let $t_u$ be the first time when $u$ experiences an inversion, and let $R_u$ be its rank at that time. 
Since an element of rank $\geq u$ must be chosen at $t_u$, we have that, for any $\ell \geq 1$,
$$ \Pr[  R_u \geq \ell  ]  \leq \exp( - \ell / k).$$

\noindent In particular, $\Pr [ R_u \geq ck \log k ] \leq (1 / k)^c$, for any constant $c \geq 1$. 
That is, $u$ has rank $\leq ck \log k $ at the time where it experiences its first inversion, w.h.p. in $k$. 
We now wish to bound the number of removals between the point when $u$ experiences its first inversion, and the point when $u$ is removed. 
Let this random variable be $\Delta_k$. 
By Claim~\ref{claim:top}, the top element is always removed within $O( k )$ trials in expectation, and hence we can show that 
$$\Delta_k \geq 2c k^2 \log k, \textnormal{ with probability at most $2 / k^c$,}$$
\noindent for $c \geq 1$, by using inequality :

\begin{equation}
\Pr[\Delta_k \geq 2c k^2 \log k] \le \Pr[R_u \geq ck \log k]+
\Pr[\Delta_k \geq 2c k^2 \log k | R_u < ck \log k].
\end{equation}

We can prove that $\Pr[\Delta_k \geq 2c k^2 \log k | R_u < ck \log k] \le 1/k^c$ by using bounds for the negative binomial distribution with success probability at least $1/k$, since success in our case is deleting the top element. 

We now wish to know the probability that for a fixed $\Delta_k$ one of these steps is an inversion experienced by $u$ on or  above $v$. Node $v$ has lesser priority than $u$, chosen uniformly at random. 
Let $j$ be the position of $v$, noting that $\Pr[ j = \pi(u) + \ell ] \leq 1/(n-1)\le 2/n$, for any integer $\ell \geq 1$. 
Fix a step $t$, and pessimistically assume that $u$ is at the top of the queue at this time. Note that if $u$ is at the top of the queue, then $v$ has rank larger than $l$ at time step $t$. From the fairness bound we know that $\Pr[inv(u)\ge l/2]\le(1/e)^{l/2k}$. Conditioned on $inv(u) \le l/2$, we have that rank of $v$ is always at least $l/2$ until $u$ is removed. This allows us to use the rank bound for upper bounding the probability of $u$ experiencing inversion on or above $v$ at each step and then we can union bound over $\Delta_k$ steps.
Hence, we get that the probability that $u$ experiences an inversion on or above $v$ is at most $(1/e)^{l/2k}+\Delta_k  (1/e)^{l/2k}$. 
Further, bounding over all choices of $l$, we get the upper bound:
\begin{equation}
\frac{2}{n} \sum_{l=1}^{n-1} \Big( (1/e)^{l/2k}+\Delta_k  (1/e)^{l/2k} \Big)
\le \frac{2}{n}(2k+2k\Delta_k)=\frac{4k}{n}(1+\Delta_k).
\end{equation}

Finally, bounding over all possible values of $\Delta_k$ and their probabilities, we get that 
the probability that $u$ experiences an inversion on or above $v$ during the execution is at most:
\begin{equation}
\sum_{i=0}^{\infty} \frac{2}{k^i}\frac{4k}{n}(1+2(i+1)k^2 \log k)=O(k^3\log k/n).
\end{equation}
\end{proof}

\noindent  Furthermore, the above proof directly implies the following corollary:

\begin{corollary}
\label{cor:active}
Consider running Algorithm~\ref{alg:generic} (or Algorithm~\ref{alg:mis}) using a $k$-relaxed queue $Q$ on input graph $G=(V,E)$. For a fixed edge $e = (u,v)$, the probability that $u$ experiences an inversion on or above $v$ during an execution on a random permutation $\pi$ conditioned on $\ell(u) = t, \ell(v) > t$ is bounded by $O(k^3 \log k/(n-t))$.
\end{corollary}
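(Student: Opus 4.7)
The plan is to retrace the proof of Lemma~\ref{lem:active}, noting that almost all of its ingredients remain valid under the additional conditioning on $\ell(u) = t, \ell(v) > t$. The rank bound on $u$ at the time of its first inversion, and the negative-binomial bound on $\Delta_k$ (the number of removals between $u$'s first inversion and its eventual removal), depend only on the $k$-relaxation properties of $Q$, and not on the specific permutation placement of $u$ or $v$. Conditioning on the positions of $u$ and $v$ therefore leaves the conclusions $\Pr[R_u \geq ck \log k] \leq (1/k)^c$ and $\Pr[\Delta_k \geq 2ck^2 \log k] \leq 2/k^c$ intact, via Claim~\ref{claim:top} applied exactly as before.

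The one place where the graph size entered the original argument is in bounding the probability that an inversion experienced by $u$ during its $\Delta_k$-window falls on or above $v$. There, the proof used $\Pr[\ell(v) = \ell(u) + \ell] \leq 1/(n-1) \leq 2/n$, because without conditioning $\ell(v)$ is uniform over the $n-1$ positions distinct from $\ell(u)$. Under the corollary's conditioning, $\ell(v)$ is instead uniform over the $n-t$ positions in $\{t+1, \ldots, n\}$, so this bound is replaced by $1/(n-t)$. Running through the same per-step calculation---the fairness bound at level $\ell/2$, the rank bound on $v$ conditioned on $\lit{inv}(u) \leq \ell/2$, and a union bound over $\Delta_k$ steps---yields
\[
\frac{1}{n-t}\sum_{\ell=1}^{n-t}\bigl((1/e)^{\ell/2k} + \Delta_k\,(1/e)^{\ell/2k}\bigr) \leq \frac{2k}{n-t}\bigl(1 + \Delta_k\bigr),
\]
and averaging over the tail of $\Delta_k$ exactly as in Lemma~\ref{lem:active} produces the claimed $O(k^3 \log k/(n-t))$ bound.

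The only real subtlety to verify is an independence point: the scheduler's internal randomness, which drives the rank and fairness tails, must be independent of the assignment of labels to specific vertices, so that conditioning on $\ell(u)$ and on $v$ lying in a later position does not perturb the tail bounds governing $R_u$ or $\Delta_k$. Since the $k$-relaxation guarantees are stated as properties of the sequence of \lit{ApproxGetMin} outputs regardless of which priorities are attached to which tasks, this independence holds, and the corollary follows by substituting $n-t$ for $n$ in the single place where graph-size arithmetic appeared in the proof of Lemma~\ref{lem:active}.
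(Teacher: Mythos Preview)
Your proposal is correct and matches the paper's intended approach exactly: the paper simply asserts that ``the above proof directly implies the following corollary,'' and your write-up spells out precisely why---namely, that every ingredient of the Lemma~\ref{lem:active} argument carries over under the conditioning, with the sole change being that $\ell(v)$ is now uniform over $n-t$ positions rather than $n-1$, so the factor $2/n$ becomes $1/(n-t)$. Your explicit check of the independence between the scheduler's randomness and the label assignment is a helpful clarification that the paper leaves implicit.
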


In Appendix~\ref{app:tighter}, we also prove a slightly tighter version of Lemma~\ref{lem:active} for the case where the implementation of $Q$ provides the further guarantee of \emph{only} returning elements from the \topk. Note that such a queue is always $k$-rank bounded, but is not necessarily $k$-fair.

Our second technical lemma quantifies the expected number of \emph{priority inversions} incurred by an element, $u$, of $Q$ once $u$'s dependencies have been processed---that is, the number of times an element of $Q$ with lower priority than $u$ is returned by $\mathsf{GetApproxMin}()$ before $u$ is. If a vertex $u$ has no predecessor in $Q$ at some time $t$, we call $u$ a \emph{root}. 

\begin{lemma}
\label{lem:inv}
Consider running Algorithm~\ref{alg:generic} (or Algorithm~\ref{alg:mis}) using a $k$-relaxed queue $Q$ on input graph $G=(V,E)$. For a fixed node $u$, if $u$ is a root at some time $t$, at most $O(k)$ other elements of $Q$ with lower priority than $u$ are deleted after $t$ in expectation.
\end{lemma}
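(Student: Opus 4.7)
The plan is to observe that the quantity being bounded is at most $\lit{inv}(u)$, and then invoke the fairness guarantee of the $k$-relaxed scheduler directly. Thus the lemma should follow essentially immediately from the fairness tail bound stated in Section~\ref{sec:relax}, with no additional machinery required.

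First, I would unpack the root condition. Because $u$ is a root at time $t$, it has no unprocessed (resp.\ live, in the MIS variant) predecessor in $Q$ from time $t$ onward—new predecessors cannot appear, since only successors could be inserted on reinsertion. Therefore the very next time $\mathsf{ApproxGetMin}$ returns $u$, the conditional on line~\ref{line:find-pred} (resp. the live-predecessor check in Algorithm~\ref{alg:mis}) fails, and $u$ is permanently removed. Let $T_u \geq t$ denote this removal step. By the definition of a priority inversion in Section~\ref{sec:relax}, every step in the interval $(t, T_u]$ at which an element of priority strictly lower than $u$ is retrieved from $Q$ is counted by $\lit{inv}(u)$. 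Hence the number of lower-priority elements deleted after $t$---interpreted in the only way that makes the statement nontrivial, namely ``deleted after $t$ while $u$ is still in $Q$''---is at most $\lit{inv}(u)$.

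Second, I would apply the fairness bound with $\phi = k$: for every integer $\ell \geq 1$, $\Pr[\lit{inv}(u) \geq \ell] \leq e^{-\ell/k}$. A standard tail-sum identity for nonnegative integer random variables then gives
\[
\E[\lit{inv}(u)] \;=\; \sum_{\ell \geq 1} \Pr[\lit{inv}(u) \geq \ell] \;\leq\; \sum_{\ell \geq 1} e^{-\ell/k} \;=\; \frac{1}{e^{1/k} - 1} \;=\; O(k),
\]
which is exactly the claimed bound.

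There is no real obstacle here: the only subtlety is interpretive. Because arbitrarily many lower-priority tasks can be removed in the remainder of the execution after $u$ itself leaves $Q$, one must read the lemma as restricting attention to deletions before $u$ is removed, after which the statement reduces to the fairness property. Notably, this proof only uses the fairness bound and not the rank bound, in contrast to Lemma~\ref{lem:active}; this is a useful separation, since it means the $O(k)$ bound here would persist even under schedulers with weaker rank guarantees.
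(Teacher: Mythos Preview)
Your proposal is correct and matches the paper's own proof, which simply states that the lemma ``follows immediately from the $k$-fairness provided by $Q$.'' You have merely spelled out the tail-sum computation and the (necessary) interpretive point that the count stops once $u$ leaves $Q$; there is no substantive difference in approach.
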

\begin{proof}
Follows immediately from the $k$-fairness provided by $Q$.
\end{proof}

We stress that these two lemmas quantify the entire contribution of (the randomness of) the relaxation of $Q$ to the analysis. The major burden of the analysis, particularly for MIS, is instead to manage the interaction between the randomness of the \emph{permutation} $\pi$ (which is not inherently related to the relaxation of $Q$) and the structure of $G$. Equipped with these lemmas, we are ready to do just that.

\subsection{Algorithm~\ref{alg:generic}: The General Case}
\label{sec:generic-analysis}

The following theorem shows that the relaxed queue in Algorithm~\ref{alg:generic} has essentially no cost for sparse dependency graphs with $m = O(n)$ and still completes in $O(nk)$ iterations even for dense dependency graphs when $m = O(n^2)$. For example, Theorem~\ref{thm:generic}  demonstrates that task-based problems which are inherently sparse such as Knuth Shuffle and List Contraction~\cite{Blelloch} incur only negligible ``wasted work'' when utilizing a $k$-relaxed queue with $k \ll n$. Furthermore, graph problems with edge dependencies such as greedy vertex coloring incur a cost proportional to the sparsity of the underlying graph. Although the result is not technically challenging, it is tight up to factors of $k$.

\begin{reptheorem}{thm:generic}
For a dependency graph $G = (V,E)$ with $|V| = n$ vertices and $|E| = m$ edges, Algorithm~\ref{alg:generic} runs for $n + O\left(\frac{m}{n}\right)\poly(k)$ iterations.
\end{reptheorem}

\begin{proof}
We will compute the expected number of failed deletes directly as follows: Whenever a failed delete occurs on a node $w$, charge it to the lexicographically first edge, $e=(u,v)$, for which $u$ and $v$ are both unprocessed and $\ell(v) \leq \ell(w)$ (i.e., with possibly $v=w$). Note that (1) such an edge must exist or else a failed delete could not have occurred, (2) the failed delete must represent a priority inversion on $u$, and (3) $u$ must be a root (because $e$ is lexicographically first).  The first time an edge $e$ is charged, we call $e$ the \emph{active} edge until $u$ is processed. Since $u$ is a root for the duration of $e$'s status as active edge, by Lemma~\ref{lem:inv}, $u$ only experiences $O(k)$ priority inversions in expectation while $e$ is active, which upper bounds the number of failed deletes charged to $e$. 

Let $A_e$ be the event that edge $e=(u,v)$ ever becomes active. $A_e$ can only occur if $u$ experiences an inversion on or above $v$ during the execution, which is bounded by $O(k^3 \log k /n)$ by Lemma~\ref{lem:active}. Thus, the total expected cost of $e$ is at most $\E[c(e)] = \prob{A_e}\E[c(e)\big| A_e] = O(k^4 \log k)/n = \poly(k)/n$. There are $m$ edges so the total cost is $\Theta\left(\frac{m}{n}\right)\poly(k)$ as claimed.
\end{proof}

Briefly, to see that Theorem~\ref{thm:generic} is tight (up to factors of $k$), consider executing a greedy graph coloring problem on a clique. In this case, at any step, only the highest priority node can ever be processed, and for each such node, $u$, it takes $O(k)$ delete attempts before $u$ is processed. Thus in total, the algorithm runs for $O(nk)$ iterations.

\subsection{Algorithm~\ref{alg:mis}: Maximal Independent Set}
\label{sec:mis-analysis}
The following theorem bounds the number of iterations of Algorithm~\ref{alg:mis}.  By contrast to Algorithm~\ref{alg:generic}, we show that using a relaxed queue for computing Maximal Independent Sets on large graphs has essentially no cost at all, even for dense graphs! In particular, Algorithm~\ref{alg:mis} incurs a relaxation cost with no dependence on the size or structure of $G$, only on the relaxation factor $k$. 

\begin{reptheorem}{thm:mis}
Algorithm~\ref{alg:mis} runs for $n + \poly(k)$ iterations.
\end{reptheorem}

\begin{proof}

Denote the lexicographically first MIS of $G$ with respect to $\pi$ as $MIS_\pi$. We first identify the key edges in the execution of Algorithm~\ref{alg:mis}. We will say an edge $e=(u,v)$ is a \emph{hot edge} w.r.t. $\pi$ if $u$ is the smallest labeled neighbor of $v$ in $MIS_\pi$. Note that if $(u,v)$ is a hot edge, $v$ is not in $MIS_\pi$ and $u$ has a smaller label than $v$. Let $H_e$ be the event that $e$ is a hot edge w.r.t. $\pi$. Importantly, $H_e$ depends only on the randomness of $\pi$ and not on the randomness of the relaxation of $Q$. We make two key observations about hot edges that will allow us to prove the theorem:

\begin{claim}
\label{cl:num-forward}
There is exactly one hot edge incident to each vertex $v \in V \setminus MIS_\pi$, and therefore the total number of hot edges is strictly less than $n$.
\end{claim}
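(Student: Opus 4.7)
The plan is to prove both halves of the claim by unpacking the definitions of $MIS_\pi$ and of a hot edge, exploiting the defining property of the greedy sequential MIS: a vertex $v$ fails to join $MIS_\pi$ if and only if at least one of its neighbors with a smaller label is already in $MIS_\pi$ at the time $v$ is examined.

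First, I would argue that no hot edge $(u,v)$ can have its ``large-label'' endpoint $v$ lying in $MIS_\pi$. Indeed, the definition requires $u$ to be a neighbor of $v$ that belongs to $MIS_\pi$; since $MIS_\pi$ is independent, no vertex of $MIS_\pi$ can have a neighbor in $MIS_\pi$, so $v\notin MIS_\pi$ whenever $(u,v)$ is hot. In particular, for any $v\in V\setminus MIS_\pi$, any hot edge incident to $v$ must have $v$ as its large-label endpoint (the role of ``$u$'' in the definition requires membership in $MIS_\pi$, which $v$ does not have). Next, for a fixed $v\in V\setminus MIS_\pi$, the greedy characterization above guarantees that the set $N(v)\cap MIS_\pi$ of MIS-neighbors of $v$ with smaller label than $v$ is nonempty. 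Let $u^\star$ be the element of this set with the smallest label; then $(u^\star,v)$ is hot, and it is the unique hot edge incident to $v$, because any hot edge incident to $v$ is of the form $(u,v)$ with $u$ the smallest-labeled MIS-neighbor of $v$, which pins $u=u^\star$.

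Finally, to obtain the global count I would charge each hot edge to its unique non-MIS endpoint. The argument above shows this charging is a bijection between hot edges and vertices of $V\setminus MIS_\pi$, so the number of hot edges equals $|V\setminus MIS_\pi|=n-|MIS_\pi|$. Since greedy MIS on a nonempty graph always produces a nonempty independent set (the smallest-labeled vertex is always added), we have $|MIS_\pi|\ge 1$, giving strictly fewer than $n$ hot edges.

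There is no real technical obstacle here; the only subtlety is reading ``incident to $v$'' in the claim correctly, namely realizing that a non-MIS vertex can only appear as the large-label endpoint of a hot edge, which is exactly what makes the count of hot edges equal to $|V\setminus MIS_\pi|$ rather than some larger quantity that double-counts through the MIS endpoint.
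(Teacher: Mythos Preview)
Your proof is correct and follows essentially the same reasoning as the paper's own one-sentence justification: uniqueness comes from the ``smallest labeled neighbor in $MIS_\pi$'' clause in the definition of a hot edge, and existence comes from maximality of $MIS_\pi$. Your write-up is simply a more careful unpacking of those two observations, including the explicit check that a non-MIS vertex can only occur as the large-label endpoint of a hot edge, which justifies the bijection with $V\setminus MIS_\pi$.
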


This is clear from the condition that $u$ is the smallest labeled neighbor of $v$ in $MIS_\pi$ and the fact that if $v$ is not in the $MIS_\pi$, $v$ must have at least one neighbor in $MIS_\pi$, or else $MIS_\pi$ isn't maximal.

\begin{claim}\label{cl:hot-necessary} A node $w$ is only re-inserted by Algorithm~\ref{alg:mis} if there is at least one hot edge $e = (u,v)$ with $u$ a root and $\ell(w) \geq \ell(v)$ (with possibly $v=w$). If $e$ is such an edge, we say $e$ is \emph{active}. Furthermore, at least one active hot edge satisfies $\ell(u) < \ell(w)$.
\end{claim}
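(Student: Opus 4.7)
The plan is to read off the re-insertion condition and then construct the required active hot edge by an inductive descent along $\pi$.

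\textbf{Setup.} When Algorithm~\ref{alg:mis} re-inserts $w$ at time $t$, $w$ must be live and have at least one live predecessor in $Q$. Let $L$ denote the set of live nodes in $Q$ at time $t$. Two preliminary facts will drive the construction. First, every live root at time $t$ (a node in $L$ with no live predecessor in $Q$) belongs to $MIS_\pi$: all of its higher-priority neighbors are either dead or already removed from $Q$, and the node itself is live, so in the equivalent sequential $\pi$-order construction it is added to $MIS_\pi$ at its turn. Second, any $u \in MIS_\pi$ stays live and in $Q$ until Algorithm~\ref{alg:mis} actually adds $u$ to the MIS, since $u$ can only be marked dead by a MIS-neighbor, and $u$ has none.

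\textbf{Inductive descent.} I will construct a strictly label-decreasing sequence $v_0, u_0, v_1, u_1, \ldots$ of nodes in $L$ whose terminating pair $(u_k, v_k)$ is the desired active hot edge. To anchor, pick $v_0 \in L$ with $v_0 \notin MIS_\pi$ and $\ell(v_0) \leq \ell(w)$: if $w \notin MIS_\pi$, set $v_0 = w$; otherwise $w \in MIS_\pi$, and any live predecessor $x$ of $w$ in $Q$ satisfies $x \notin MIS_\pi$ by independence of $MIS_\pi$, so set $v_0 = x$. Inductively, given $v_i \in L$ with $v_i \notin MIS_\pi$, the maximality and lex-first construction of $MIS_\pi$ guarantees that $v_i$ has a smallest-labeled $MIS_\pi$-neighbor $u_i$ with $\ell(u_i) < \ell(v_i)$; hence $(u_i, v_i)$ is a hot edge, and $u_i \in L$ by the second preliminary (if $u_i$ had already been added to MIS, $v_i$ would be dead). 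If $u_i$ is a root at time $t$, stop: $(u_i, v_i)$ is active because $\ell(v_i) \leq \ell(v_0) \leq \ell(w)$. Otherwise $u_i$ has a live predecessor $v_{i+1} \in L$ with $\ell(v_{i+1}) < \ell(u_i)$; the edge $(v_{i+1}, u_i)$ combined with $u_i \in MIS_\pi$ forces $v_{i+1} \notin MIS_\pi$, so the induction continues. Strict label-decrease on finitely many vertices forces termination at some root $u_k$.

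\textbf{Furthermore clause.} Every hot edge $(u,v)$ already satisfies $\ell(u) < \ell(v)$ (the note immediately after the definition), and every active hot edge additionally has $\ell(v) \leq \ell(w)$; hence $\ell(u) < \ell(w)$ is automatic for every active hot edge, in particular for the one just produced. The main obstacle, conceptually, is recognizing that the smallest-labeled $MIS_\pi$-neighbor of $w$ need not itself be a root, so one must chain through live predecessors until a root is reached; the remaining work is the bookkeeping that matches the live/dead/removed-from-$Q$ status of Algorithm~\ref{alg:mis} against $MIS_\pi$-membership, which is localized in the two preliminary facts above.
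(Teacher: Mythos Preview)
Your proof is correct and follows essentially the same descent argument as the paper: from $w$, walk down through live predecessors until a live root is reached, and exhibit the associated hot edge. Your version is in fact more carefully structured than the paper's sketch—your alternation $v_i \to u_i \to v_{i+1}$ (going to the \emph{smallest}-labeled $MIS_\pi$-neighbor at each stage) guarantees automatically that the terminal edge $(u_k,v_k)$ is hot, a point the paper leaves implicit.
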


If $w$ is re-inserted, then $w$ must be live and adjacent to some smaller labeled live vertex $u$. Either $u$ is a root, in which case $(u,w)$ is the claimed hot edge, or else $u$ must be adjacent to an even smaller labeled live vertex. In the latter case, we can recurse the argument down to $u$ and eventually find a hot edge. In either case, both nodes incident to the discovered active hot edge will have a label no greater than $w$'s.

\paragraph{Proof Outline.} The strategy from here is a follows: whenever a failed delete occurs on a node $w$, we will charge it to an arbitrary hot edge $e=(u,v)$ with $u$ a root and $\ell(w) \geq \ell(v)$ (of which there must be at least one by Claim~\ref{cl:hot-necessary}). Similar to Theorem~\ref{thm:generic}, we will say that $e$ is active during the interval between the first time $u$ experiences an inversion on or above $v$ and the time $u$ is processed. We say that the \emph{cost}, $c(e)$, of an edge, $e$, is the number of failed deletes charged to it (which is notably $0$ unless $e$ is both hot and, at some point, active). We then separately bound (1) the expected number of active hot edges which ever exist over the execution of Algorithm~\ref{alg:mis} and (2) the expected number of failed deletes charged to an edge, given that it is an active hot edge. Combining these will give the result.

In order to quantify the distribution of hot edges, we will need one more definition. Fix $e=(u,v)$ and let $G_{e}$ be the subgraph of $G$ induced by $V' = V\setminus \{u,v\}$ and let $\pi_{e}$ be $\pi$ restricted to $V'$. Let $L_{e,t}$ be the event that neither $u$ nor $v$ has a neighbor $w\in MIS_{\pi_e}$ with $\ell_{\pi_e}(w) < t$. Informally, $L_{e,t}$ is the event that both $u$ and $v$ are still \emph{live} in $G$ after running Algorithm~\ref{alg:mis} with an exact queue ($k=1$) for $t-1$ iterations but with $u,v$ excluded from $Q$. Like $H_e$, $L_{e,t}$ depends only on $\pi$ and not on the randomness of the relaxation of $Q$; furthermore, $L_{e,t}$ is independent from $\ell(u)$ and $\ell(v)$. 
Using this definition, we can compute:

\begin{align*}
\Pr\left[H_e\right] &= \sum_t \Pr\left[L_{e,t}\right]\Pr\left[\ell(u) = t\right]\Pr\left[\ell(v) > \ell(u) \big| \ell(u) = t\right] \\
&= \sum_t \Pr\left[L_{e,t}\right]\frac{1}{n}\frac{n-t}{n-1}.
\end{align*} 
and
\begin{align*}
\Pr\left[\ell(u) = t \big| H_e\right] &= \frac{\Pr\left[H_e \big| \ell(u) = t\right]\Pr\left[\ell(u)=t\right]}{\Pr\left[H_e\right]}\\
&=\frac{\Pr\left[L_{e,t}\right]\Pr\left[\ell(v) > t \big| \ell(u)=t\right]\Pr\left[\ell(u)=t\right]}{\Pr\left[H_e\right]}\\
&= \frac{\Pr\left[L_{e,t}\right]\frac{n-t}{n-1}\frac{1}{n}}{\sum_{t'} \Pr\left[L_{e,t'}\right]\frac{1}{n}\frac{n-t'}{n-1}}\\ &= \frac{\prob{L_{e,t}}(n-t)}{\sum_{t'} \Pr\left[L_{e,t'}\right](n-t')}.
\end{align*}

Next, we use the above formulations to bound the probability that a hot edge $e$ is ever \emph{active}. Suppose we are given that $e$ is a hot edge and $\ell(u) = t$. Then $e$ becomes active if and only if $u$ suffers an inversion on or above $v$ before $u$ is processed by the algorithm. Let $A_e$ be the event that $e$ becomes active. At this point, we might wish to apply Lemma~\ref{lem:active} directly, but unfortunately it is not clear that $\prob{A_e}$ is independent from $H_e$, which we will need. However, note that $H_e$ entails $\ell(v) > \ell(u)$ but given only that, $\ell(v)$ is otherwise independent from $H_e$. Thus, if we  condition on $\ell(u)=t$ and $\ell(v) > \ell(u)$, then $\ell(u)$ is fixed and $\ell(v)$ is (conditionally) independent from $H_e$, and therefore $A_e$ also becomes (conditionally) independent from $H_e$. Now we can apply Corollary~\ref{cor:active}, giving

\begin{align*}
\prob{A_e\big| \ell(u) = t, H_e} &= \prob{A_e\big| \ell(u) = t, \ell(v) > \ell(u)}\\ 
&= O\left(\frac{k^3 \log k}{n-t}\right).
\end{align*}

Then:

\begin{align*}
\Pr\left[A_e \big| H_e\right] = &\sum_t \Pr\left[\ell(u) = t \big| H_e\right] \prob{A_e \big| \ell(u) = t, H_e}  \\ = & \sum_t \frac{\prob{L_{e,t}}(n-t)}{\sum_{t'} \Pr\left[L_{e,t'}\right](n-t')}\frac{O(k^3 \log k)}{n-t} \\
&= O(k^3 \log k) \frac{\sum_t \prob{L_{e,t}}}{\sum_{t'}\prob{L_{e,t'}}(n-t')}.
\end{align*}

Observe that for fixed $e=(u,v)$, $\prob{L_{e,t}}$ is decreasing in $t$. In particular, for any permutation $\pi$ in which the event $L_{e,t}$ occurs, $L_{e,t-1}$ occurs also, but the reverse is not true. Let $\mu = \frac{1}{n} \sum_t \prob{L_{e,t}}$. Using \emph{Chebyshev's sum inequality}, we obtain:
\begin{align*}
\prob{A_e \big| H_e} &\leq O(k^3 \log k)\frac{n\mu}{\sum_{t'} \mu(n-t')} \\
&= O(k^3 \log k)\frac{n}{\sum_{t'} (n-t')}\\
&= O\left(\frac{k^3 \log k}{n}\right).
\end{align*}

Finally, since $u$ is a root and we only charge $e$ for failed deletes on nodes with a larger label than $v$, and therefore a larger label than $u$ as well, the number of times we charge $e$ is upper bounded by the total number of priority inversions suffered by $u$ while a root, which, by Lemma~\ref{lem:inv}, is given by $O(k)$ in expectation. Thus $\E[c(e) \big| A_e, H_e] = O(k)$. 

Combining all the parts, we have a final bound on the total cost:
\begin{align*}
\E\left[\sum_e c(e)\right] &= \sum_e \prob{H_e}\prob{A_e|H_e}\prob{c(e)|A_e, H_e}\\
&= \sum_e \prob{H_e} \cdot O\left(\frac{k^3 \log k}{n}\right)\cdot O(k)\\
&= O\left(\frac{k^4 \log k}{n}\right)\E\left[\#\{H_e\}\right]\\
&\stackrel{Claim~\ref{cl:num-forward}}{<} O\left(\frac{k^4 \log k}{n}\right)\cdot n\\
& = O(k^4 \log k) = \poly(k), \textnormal{ q.e.d. }
\qedhere
\end{align*}
\end{proof}
\section{Experimental Results}

\paragraph{Synthetic Tests.} 
To validate our analysis, we implemented the sequential relaxed framework described in Algorithm~\ref{alg:generic}, 
and used it to solve instances of MIS, matching, Knuth Shuffle, and List Contraction using a relaxed scheduler which uses the MultiQueue algorithm~\cite{MQ}, for various relaxation factors.
We record the average number of extra relaxations, that is, the number of failed deletes during the entire execution, across five runs. 
Results are presented in Table~\ref{tab:table1}. 
We considered graphs of various densities with $10^3$ and $10^4$ vertices.  
The results appear to confirm our analysis: the number of extra iterations required for MIS is low, and scales only in $K$ and not in  $|V|+|E|$. There is some variation
for fixed $K$ and varying $|V|+|E|$, but it is always within a factor of $2$ for our trials and does not appear to be obviously correlated with $|V|+|E|$.  

\begin{table}
  \begin{center}

\begin{tabular}{|r|r||c|c|c|c|c|}
\cline{3-7}                   
\multicolumn{2}{c|}{} & \multicolumn{5}{c|}{$k$}\\                                          
\hline                                                                                      
\multicolumn{1}{|c|}{$|V|$} & \multicolumn{1}{c||}{$|E|$} & 4 & 8 & 16 & 32 & 64\\                                                      
\hline                                                                                      
\multirow{3}{*}{1000} & 10000 & 12.8 & 56.8 & 148.8 & 308.6 & 583.0\\                       
& 30000 & 7.0 & 40.8 & 108.6 & 264.2 & 478.6\\                                               
& 100000 & 12.4 & 40.0 & 100.6 & 225.8 & 427.2\\           
\hline                                                                                      
\multirow{3}{*}{10000} & 10000 & 11.0 & 43.2 & 145.4 & 336.4 & 738.6\\  
& 30000 & 16.6 & 71.4 & 196.0 & 437.6 & 890.2\\
& 100000 & 13.0 & 56.2 & 144.4 & 290.6 & 529.6\\
\hline
\end{tabular}  

%
%
%
%
     \caption{Simulation results for varying parameters of Maximal Independent Set.  $k$ is the relaxation factor, $n$ is the number of nodes and $m$ is the number of edges. The number of extra iterations is averaged over 2 runs. }
    \label{tab:table1}      
  \end{center}
  
\end{table}

\paragraph{Concurrent Experiments.}
We implemented a simple version of our scheduling framework, using a variant of the MultiQueue~\cite{MQ} relaxed priority queue data structure. 
We assume a setting where the input, that is, the set of tasks, is loaded initially into the scheduler, and is removed by concurrent threads. 
We use lock-free lists to maintain the individual priority queues and we hold pointers to the adjacency lists of each node within the queue elements, 
in order to be able to efficiently check whether a task still has outstanding dependencies. 

We compared to the exact scheduling framework using the Wait-free Queue as Fast as Fetch-and-Add~\cite{wfqueue}. Since there could still be some reordering of tasks due to concurrency,
we elect to use a backoff scheme wherein if an unprocessed predecessor is encountered, we 
wait for the predecessor to process.
In practice this rarely occurs.

\paragraph{Setup.} 
Our experiments were run on an Intel Xeon Gold 6150 machine with 4 sockets, 18 cores per socket and 2 hyperthreads per core, for a total of 36 threads per socket, and 144 threads total.
The machine has 512GB of memory, with 128GB \textit{local} to each socket.
Accesses to local memory are cheaper than accesses to remote memory. 
We pinned threads to avoid unnecessary context switches and to fill up sockets one at a time.
The machine runs Ubuntu 14.04 LTS.
We used the GNU C++ compiler (G++) 6.3.0 with optimization level \texttt{-O3}.

Experiments are performed on $G(n, p)$ random graphs in three classes; {sparse} graphs with $10^8$ nodes and $10^9$ edges, \textit{small dense} graphs with $10^6$ nodes and $10^9$ edges, and \textit{large dense} graphs with $10^7$ nodes and $10^{10}$ edges.
Our experiments were bottlenecked by graph generation and loading time so we were limited to these graph sizes.

For each data point, we run five trials.
In each trial, a graph is generated in parallel by 144 threads, and then we measure the time for $n$ threads to compute an MIS.
Note that, even when $n=1$, we generate the graph using 144 threads.
To ensure that this does not change memory locality in a way that would invalidate our results, we used the \texttt{numactl} utility to cause memory to be allocated locally on \textit{only} the sockets where $n$ threads will compute the MIS.
We verified that this yields the same behavior as experiments where the graph is generated with only $n$ threads. 
(Without using \textit{numactl}, we observed significant slowdowns in the sequential algorithm.)

The number of queues in the MultiQueue is $4\times$ the number of threads.
In our graphs, we plot the average run time on a logarithmic y-axis versus the number of concurrent threads.
Error bars show \textit{minimum} and \textit{maximum} run times.

\begin{figure}
\centering
    \large Sparse graph ($10^8$ nodes and $10^9$ edges)
    
	\includegraphics[scale=0.28]{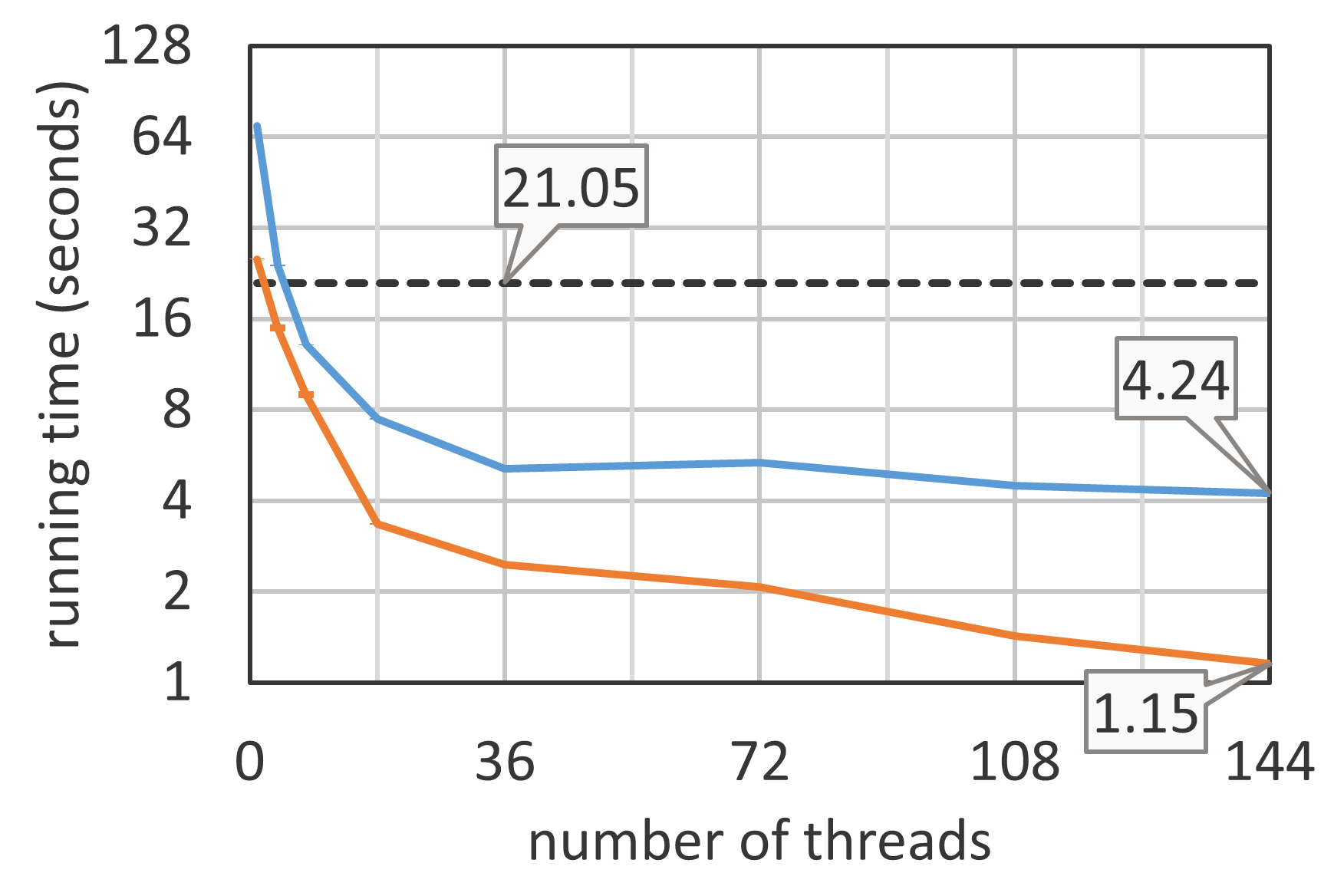}

    \large Small dense graph ($10^6$ nodes and $10^9$ edges)

\includegraphics[scale=0.28]{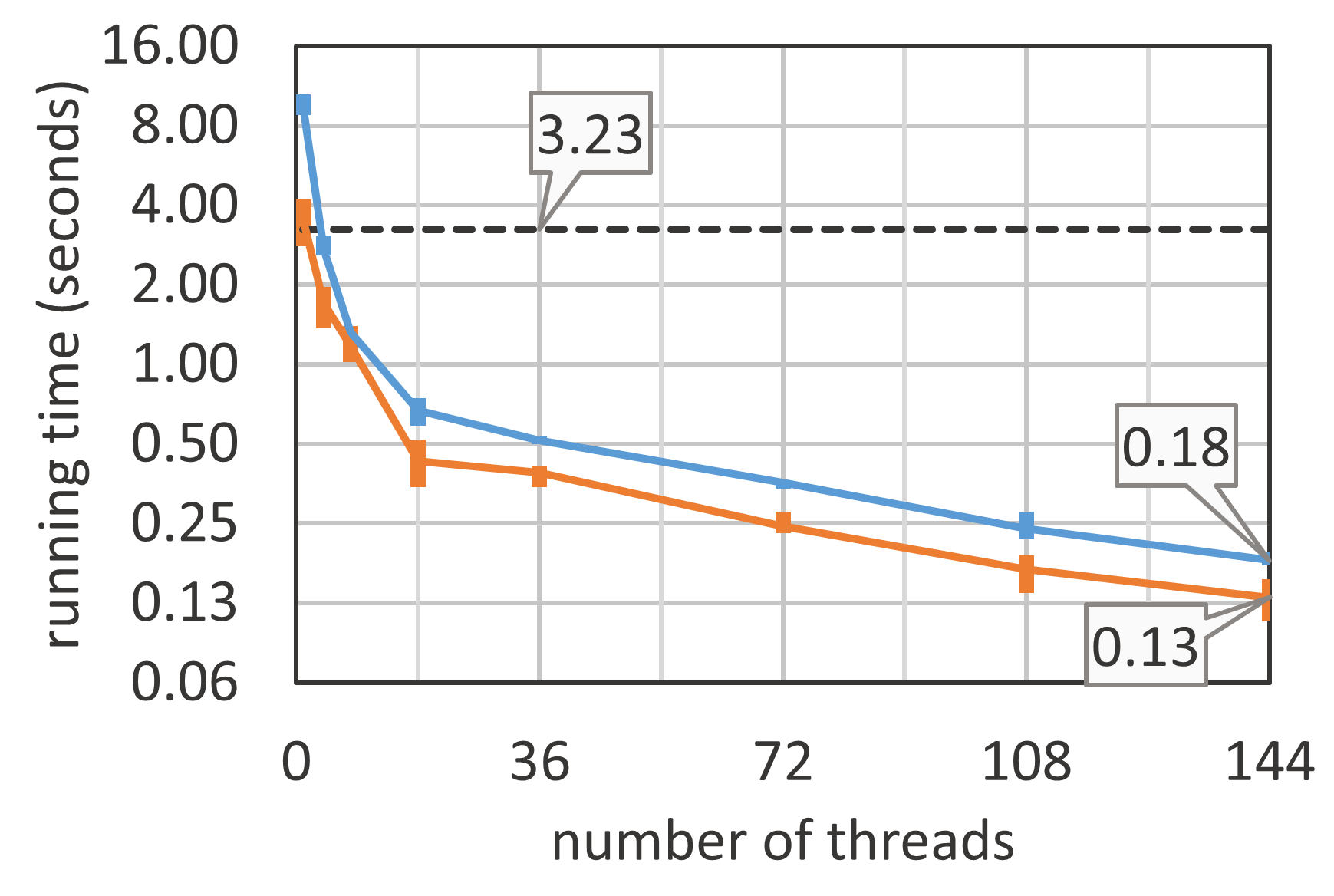}

    \large Large dense graph ($10^7$ nodes and $10^{10}$ edges)
    
	\includegraphics[scale=0.28]{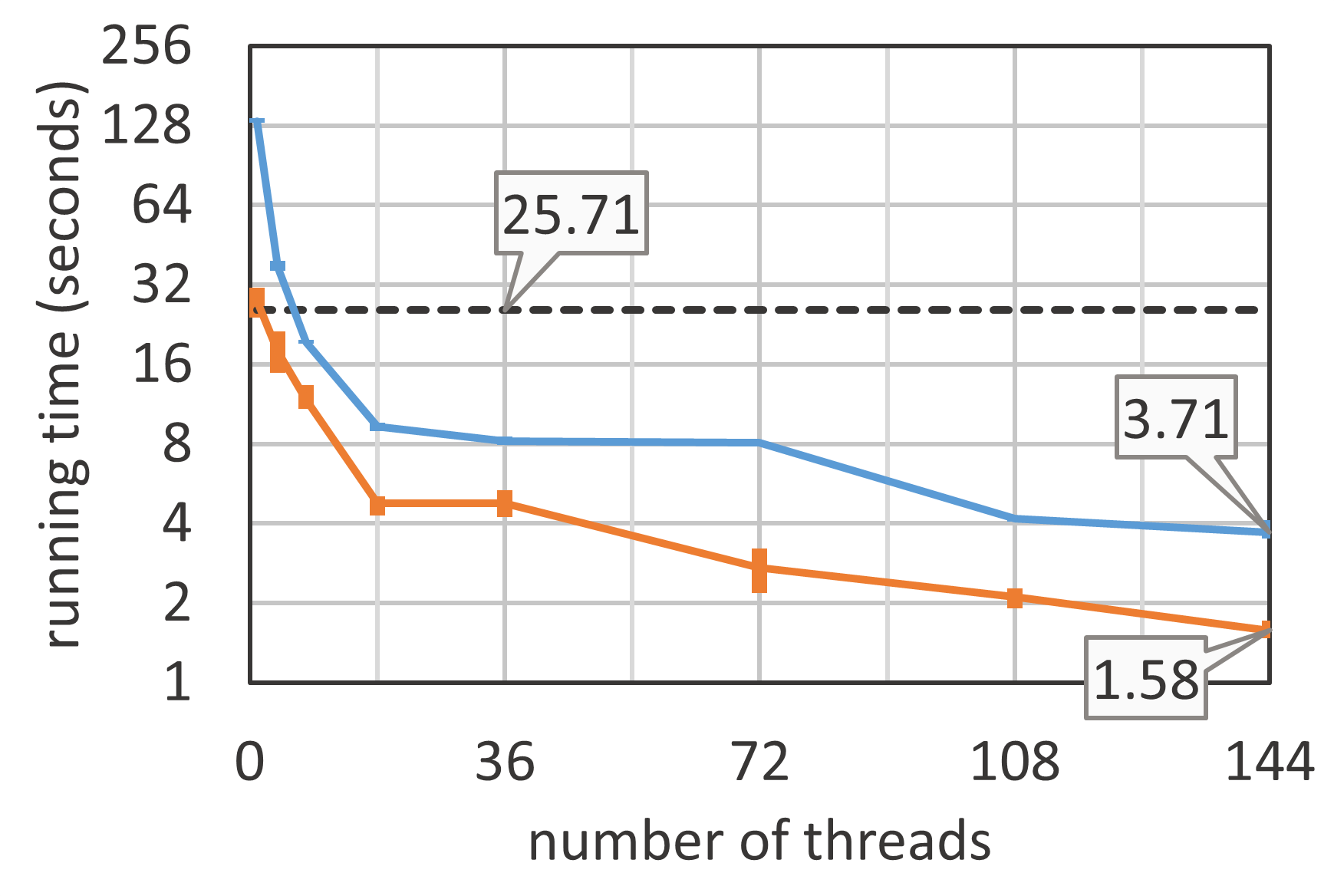}

	\includegraphics[scale=0.3]{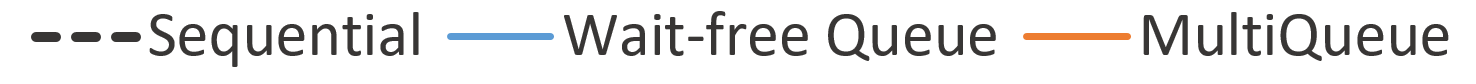}

\caption{Results for concurrent MIS experiments.}
\label{fig:exp_mis}
\end{figure} 

\paragraph{Discussion.} Figure~\ref{fig:exp_mis} shows that our framework using a relaxed scheduler scales with respect to the time to compute MIS over the target graph all the way up to max thread count.
The exact framework using the fast wait-free queue also scales, but not as well. 
In the sparse graphs, the relaxed scheduler is up to $\sim18.2\times$ faster than optimized sequential code, compared to the exact scheduler, which peaks at $\sim 5.0 \times$ faster.
In the small dense graphs, where the time spent traversing edges in the MIS algorithm dominates the minor cost of dequeuing nodes, the exact scheduler achieves a peak speedup of $\sim17.8\times$ over the sequential algorithm, which is approaching the relaxed scheduler's peak speedup of $\sim24.6\times$.
However, in the large dense graphs, even though many edges are still traversed by the MIS algorithm, there are sufficiently many nodes to be dequeued that the performance advantage of the relaxed scheduler shows through: it achieves speedup of up to $\sim16.3\times$ compared to the exact scheduler, which manages only $\sim6.9\times$.
Note that the single threaded performance of the relaxed scheduler is also quite close to the sequential algorithm.
In contrast, the exact scheduler is orders of magnitude slower with a single thread. 

\section{Future Work}
From a theoretical perspective, the natural next step would be to tighten the $\poly(k)$ bound on failed deletes, both for the generic algorithm and for MIS; in fact, we conjecture that the $\poly(k)$ bounds in both Theorems~\ref{thm:generic} and~\ref{thm:mis} can be replaced with $\Theta(k)$. However, proving such a bound seems to require a deep understanding of the interplay between the structure of $G$ and the effects of the randomness of a $k$-relaxed queue, which we had to take care in our analysis to keep \emph{separate}.  Also of interest is to discover more applications, and perhaps more instances like MIS in which the bound in Theorem~\ref{thm:generic} can be improved on. 

One shortcoming of our approach is the fact that our cost measure is the number of \emph{vertex} accesses in the priority queue. 
Notice that in theory our bounds may be substantially different when expressed in other metrics, such as the number of edge accesses for the algorithm to terminate, which are closer to standard \emph{work} bounds. 
We plan to investigate such cost measures in future work.

From the practical perspective, the immediate step would to improve upon our preliminary results, and implement a high-performance variant of this scheduler, and use this framework in the context of more general graph processing packages.

\bibliographystyle{plain}
\bibliography{bibliography}

\appendix
\section{A Tighter Version of Lemma~\ref{lem:active}.} 
\label{app:tighter}
We now prove a slightly tighter version for the case where only elements from the top-k may be chosen by the scheduler $Q$. Note that the condition in the following Lemma (that $u$ and $v$ are simultaneously in the top-k) is a pre-requisite for $u$ to experience an inversion on or above $v$, and thus the Lemma is slightly stronger than necessary.

\begin{lemma}
Consider running Algorithm~\ref{alg:generic} (or Algorithm~\ref{alg:mis}) using a $k$-relaxed queue $Q$ on input graph $G=(V,E)$. For a fixed edge $e = (u,v)$, the probability that both $u$ and $v$ are simultaneously in the top-k of $Q$ during any execution on a random permutation $\pi$ is bounded by $O(k^2/n)$.
\end{lemma}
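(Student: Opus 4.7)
The plan is to bound the expected number of snapshots at which $u$ and $v$ are both in the top-$k$, and then conclude via Markov's inequality.

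\textbf{Step 1 (Snapshot structure).} I would first observe that under a top-$k$-only scheduler, failed deletions merely re-insert the retrieved element and therefore leave the queue contents unchanged. Consequently, the top-$k$ (as a set of vertices in $Q$) changes only at successful deletions. Since there are exactly $n$ successful deletions, the execution visits at most $n+1$ distinct snapshots $T_0, T_1, \ldots, T_n$, each of size at most $k$.

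\textbf{Step 2 (Pointwise counting).} For each snapshot $i$, the number of unordered pairs of vertices contained in $T_i$ is at most $\binom{k}{2}$. Letting $N_{u', v'}$ denote the number of snapshots at which $\{u', v'\} \subseteq T_i$, summing gives, pointwise (i.e., for every realization of randomness),
\[
\sum_{\{u', v'\} \subseteq V} N_{u', v'} \;=\; \sum_i \binom{|T_i|}{2} \;\leq\; (n+1)\binom{k}{2}.
\]

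\textbf{Step 3 (Symmetrization and expectation).} Next, I would take expectations over the randomness of $\pi$ and of the scheduler. The key observation is that $\pi$ is a uniform bijection $V \to [n]$, so coupling $\pi$ with an independent uniform random relabeling $\sigma : V \to V$ produces another uniformly distributed permutation $\pi \circ \sigma$. Transporting the event across this relabeling effectively integrates out the graph's asymmetry, reducing the problem to bounding the \emph{average} of $\mathbb{E}[N_{u', v'}]$ over all vertex pairs. Combined with Step 2, this yields
\[
\mathbb{E}[N_{u, v}] \;\leq\; \frac{(n+1)\binom{k}{2}}{\binom{n}{2}} \;=\; O(k^2/n).
\]

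\textbf{Step 4 (Markov).} Finally, the event that $u, v$ are simultaneously in the top-$k$ at some point during the execution is precisely $\{N_{u, v} \geq 1\}$, so
\[
\Pr[\exists i: \{u, v\} \subseteq T_i] \;\leq\; \mathbb{E}[N_{u, v}] \;=\; O(k^2/n),
\]
which yields the claim.

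The main technical hurdle I expect is formalizing the symmetrization in Step 3: because the graph $G$ is fixed and need not be vertex-transitive, different pairs $(u', v')$ may a priori yield different expected counts $\mathbb{E}[N_{u', v'}]$, so one cannot directly equate the expectation for the specific pair $(u, v)$ with the pair-average. The resolution is that since $\pi$ itself is uniformly random, composing with a uniform random $\sigma$ preserves the distribution of $\pi$ while effectively randomizing which pair is under consideration; equivalently, the label-level dynamics of $Q$ under a top-$k$-only scheduler are invariant under any joint relabeling of vertex identities and the corresponding edges of $G$, so that $\pi(u), \pi(v)$ can be treated as a uniformly random pair of labels when computing the expected snapshot count. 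Once this is nailed down, the counting inequality of Step 2 gives the bound, and the rest is mechanical.
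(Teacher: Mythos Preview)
Your Steps 1, 2, and 4 are fine, but Step 3—the symmetrization—has a genuine gap that does not seem repairable along the lines you sketch. The issue is that the snapshot sequence $(T_i^{\text{label}})_i$, viewed at the level of labels, is \emph{not} independent of the pair $(\ell(u),\ell(v))$: it is a function of the label-level dependency graph $G_\pi$, and conditioning on $\{\ell(u),\ell(v)\}=\{a,b\}$ forces the vertices sitting at labels $a,b$ to be the specific vertices $u,v$ (which are adjacent, and may have atypical degrees). Your proposed fix—``jointly relabel vertex identities and the corresponding edges of $G$''—is a true invariance statement, but it replaces $G$ by $\sigma(G)$; it therefore shows that $N_{u,v}(G,\pi)$ and $N_{\sigma(u),\sigma(v)}(\sigma(G),\pi)$ are equidistributed, which says nothing about $N_{u',v'}(G,\pi)$ for a \emph{fixed} $G$. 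Concretely, on a graph consisting of a single edge $(u,v)$ together with $n-2$ isolated vertices, once $u$ and $v$ co-occur in the top-$k$ the higher-labeled endpoint is blocked and the pair persists for roughly $k$ snapshots in expectation, whereas a non-edge pair persists for roughly $k/2$; so $\mathbb{E}[N_{u,v}]$ is strictly larger than the average over all pairs, and the averaging step cannot be applied as stated.

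The paper's proof sidesteps this entirely by decoupling the two sources of randomness rather than symmetrizing. It observes that once $u$ enters the top-$k$, the current top-$k$ is flushed after the rank-$1$ element has been removed $k$ times; the number of iterations $X_u$ this takes is (stochastically dominated by) a negative binomial with mean $k^2$, and crucially $X_u$ depends \emph{only} on the scheduler's coin flips, not on $\pi$ or on $G$. Since at most one new label enters the top-$k$ per iteration, $v$ can co-occur with $u$ only if $|\ell(u)-\ell(v)|\le X_u$. Because $|\ell(u)-\ell(v)|$ is determined by $\pi$ alone and is therefore independent of $X_u$, one gets $\Pr[e\in\text{top-}k]\le 2\,\mathbb{E}[X_u]/n = O(k^2/n)$ directly. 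This is where your approach and the paper's diverge: the paper bounds the window in which $u$ can live using scheduler randomness that is provably independent of the permutation, whereas your averaging argument would require a symmetry of the \emph{permutation-dependent} dynamics that the fixed graph $G$ does not in general provide.
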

\begin{proof}
We will write $e\in\topk$ as shorthand for the event that $u$ and $v$ are simultaneously in the top-k of $Q$ at some time. Note that no matter what dependencies exist in the top-k of $Q$,  the entire top-k is flushed after the rank $1$ element gets deleted $k$ times. The number of iterations it takes to delete the rank $1$ element $k$ times after $u$ enters the top-k (thereby flushing $u$ with certainty) is a negative binomially distributed random variable $X_u$ with mean $k^2$ and success probability $1/k$ (due to the fairness of $Q$), and similarly for $X_v$. Since $S_e$ entails that either $\ell(u) < \ell(v) < \ell(u) + X_u$ or $\ell(v) < \ell(u) < \ell(v) + X_v$, we note that the two cases are symmetric and compute:

\begin{align*}
\prob{e\in\topk} &\leq \prob{\ell(u) < \ell(v) < \ell(u) + X_u}\\
&\phantom{=} + \prob{\ell(v) < \ell(u) < \ell(v) + X_v}\\
&= 2\prob{\ell(u) < \ell(v) < \ell(u) + X_u}\\
&= 2\sum_r \prob{X_u = r} \prob{\ell(u) < \ell(v) < \ell(u) + r}\\
&= 2\sum_r \prob{X_u = r} \frac{r}{n}\\
&\leq \frac{2ck^2}{n}\prob{X_u \leq ck^2} + 2\sum_{r > ck^2} \prob{X_u = r}\frac{r}{n}\\
&\leq O\left(\frac{k^2}{n}\right) + 2\sum_{r'} \prob{X_u > r'k^2}\frac{(r'+1)k^2}{n}\\
&\leq O\left(\frac{k^2}{n}\right)\left(1 + \sum_{r'}e^{O(-r')}(r'+1)\right)\tag{*}\\
&=O\left(\frac{k^2}{n}\right),
\end{align*}

where (*) uses a standard tail bound on the Negative Binomial Distribution\footnote{See~\cite{harvey-lecture} for a derivation.}.\end{proof}
%
%
%
%

\end{document}